\newtheorem{theorem}{Theorem}
\newtheorem{lemma}{Lemma}
\newtheorem{definition}{Definition}
\newtheorem{assumption}{Assumption}
\newtheorem{remark}{Remark}
\newtheorem*{problem}{Problem Statement}
\newmdtheoremenv{theo}{Theorem}
\begin{document}
%
\title{Set-based State Estimation with Probabilistic Consistency Guarantee under Epistemic Uncertainty}
%
%
%

\author{Shen Li*$^{1}$, Theodoros Stouraitis*$^{1,2}$, Michael Gienger$^{3}$, Sethu Vijayakumar$^{2,4}$, and Julie A. Shah$^{1}$
\thanks{Manuscript received: October, 6, 2021; Revised January, 14, 2022; Accepted February, 7, 2022. This paper was recommended for publication by Editor Jens A. Kober upon evaluation of the reviewers' comments.
This research was supported by
the U.S.A. Office of Naval Research (Sponsor Award ID N00014-18-1-2815),
the Alan Turing Institute,
and the Honda Research Institute Europe. (Corresponding author: Shen Li)} 
\thanks{*Both authors contributed equally.}
\thanks{$^{1}$Authors are with CSAIL, Massachusetts Institute of Technology, U.S.A. {\tt\small \{shenli, theostou\}@mit.edu}}%
\thanks{$^{2}$Authors are with School of Informatics, University of Edinburgh, U.K..}
\thanks{$^{3}$Author is with Honda Research Institute Europe, Germany.}
\thanks{$^{4}$Author is with Alan Turing Institute, U.K..} 
\thanks{Digital Object Identifier (DOI): see top of this page.}
}
%
%

\markboth{IEEE Robotics and Automation Letters. Preprint Version. Accepted February, 2022}
{Li \MakeLowercase{\textit{et al.}}: Set-based State Estimation with Probabilistic Consistency Guarantee} 

%



\maketitle

\begin{abstract}
Consistent state estimation is challenging, especially under the epistemic uncertainties arising from learned (nonlinear) dynamic and observation models.
In this work, we propose a set-based estimation algorithm, named Gaussian Process-Zonotopic Kalman Filter (GP-ZKF), that produces zonotopic state estimates while respecting both the epistemic uncertainties in the learned models and aleatoric uncertainties.
Our method guarantees probabilistic consistency, in the sense that the true states are bounded by sets (zonotopes) across all time steps, with high probability.
We formally relate GP-ZKF with the corresponding stochastic approach, GP-EKF, in the case of learned (nonlinear) models.
In particular, when linearization errors and aleatoric uncertainties are omitted and epistemic uncertainties are simplified, GP-ZKF reduces to GP-EKF.
We empirically demonstrate our method's efficacy in both a simulated pendulum domain and a real-world robot-assisted dressing domain, where GP-ZKF produced more consistent and less conservative set-based estimates than all baseline stochastic methods.
\end{abstract}

\begin{IEEEkeywords}
Robust/Adaptive Control,
Physical Human-Robot Interaction,
State Estimation, Nonlinear Filtering
\end{IEEEkeywords}

%
\IEEEpeerreviewmaketitle


\section{Introduction}\label{sec:introduction}
 
\IEEEPARstart{S}{tate} estimation is critical for robot decision making, especially during human-robot interactive tasks, where physical states can be partially observed due to occlusions~\cite{zhang2019probabilistic} and latent mental states can influence the interaction~\cite{xu2015optimo,chen2020trust}.
For a robot to safely interact with a human, such states are typically estimated based on models~\cite{choudhury2019utility}.
Due to the complexity in model specification, human behavior and observation models are usually learned from data~\cite{unhelkar2020decision}.
These learned models typically include errors, and any process has stochastic noise; the former is described as ``epistemic uncertainty,'' and the latter as ``aleatoric uncertainty,'' both of which present critical challenges for estimation algorithms.
For example, in a robot-assisted dressing scenario (see Fig.~\ref{fig:teaser}), the epistemic uncertainty could blind the robot, resulting in overconfidence in an erroneous human state and rendering the robot's behavior ``\textit{aggressive}'' and unsafe.
To address this challenge, we developed a set-based estimation algorithm that is able to \textit{conservatively} respect these uncertainties.

We focus on the problem of consistency in state estimation from the view of epistemic uncertainty introduced by learned models.
In prior literature, consistency has been analyzed within two paradigms: stochastic paradigm and set-based paradigm.
For stochastic methods, such as the Extended Kalman Filter (EKF),
a consistent estimate is defined as an unbiased point estimate together with a covariance matching the actual estimation error~\cite{castellanos2004limits}.
In the field of SLAM, the inconsistency issue of EKF-based approaches, such as GP-EKF~\cite{ko2009gp}, has been broadly investigated from the views of linearization errors~\cite{julier2001counter,huang2008analysis} and state unobservability~\cite{huang2007convergence,barrau2015ekf}.
%
On the other hand, set-based methods construct sets, as state estimates, where consistency can be interpreted as the sets containing the true states~\cite{alamo2005guaranteed}---\textit{i.e.}, bounded estimation error.
For known models, where \textit{epistemic} uncertainty can be omitted, set-based methods can produce estimates with guaranteed consistency under aleatoric uncertainty~\cite{alamo2005guaranteed, rego2020guaranteed}.
%
In contrast, we consider settings in which nonlinear models are learned; in such circumstances, uncertainty arises from both the learning errors (\textit{epistemic}) and noises (\textit{aleatoric}).

\begin{figure}
  \centering
  \includegraphics[height=53mm, width=0.99\columnwidth]{./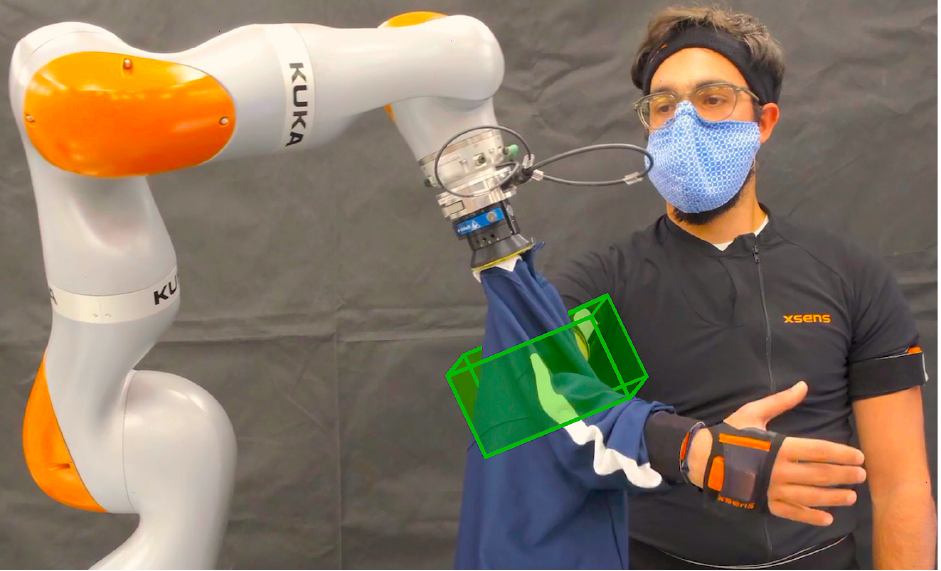}
  \caption{
  In a robot-assisted dressing scenario, we deployed our set-based estimator---GP-ZKF---to estimate the visually occluded human elbow position~\cite{zhang2019probabilistic}.
  With human dynamic and observation models learned via Gaussian Processes, GP-ZKF constructs zonotopic state estimates (illustrated with the green box) based on the force measurements at the robot end effector.
  By handling the epistemic uncertainties in the learned models, GP-ZKF guarantees probabilistic consistency---\emph{i.e.}, the true human elbow positions are bounded by the zonotopes across all time steps, with high probability.}
  \vspace{-2mm}
  \label{fig:teaser}
  \vspace{-4mm}
\end{figure}


We developed Gaussian Process-Zonotopic Kalman Filter (GP-ZKF), a set-based estimation algorithm with a probabilistic consistency guarantee under learned (dynamic and observation) models.
GP-ZKF learns both nonlinear models via Gaussian Processes (GP) and leverages GP's confidence intervals~\cite{koller2019arXivlearning} to calibrate the epistemic uncertainties in both models.
This extends prior work~\cite{combastel2016extended} that assumed bounded epistemic uncertainties in the linear parameter-varying enclosures of the nonlinear models.
Similar to set-based estimators~\cite{alamo2005guaranteed, rego2020guaranteed}, but specifically for scenarios with learned models, our approach recursively produces set-based estimates that are represented as zonotopes (a special type of polytope).
These zonotopes are designed to respect both epistemic and aleatoric uncertainties and guaranteed to contain the true states across all times steps, with high probability, rendering GP-ZKF consistent when both nonlinear models are learned.

We formally relate our \textit{set-based} approach---GP-ZKF---with the corresponding \textit{stochastic} approach---GP-EKF~\cite{ko2009gp}---and prove that
GP-ZKF reduces to GP-EKF if GP-ZKF omits linearization errors and aleatoric, and simplifies epistemic uncertainties.
This theoretical connection under nonlinear and learned models extends prior work~\cite{combastel2015zonotopes} in relating set-based with stochastic paradigms under linear and known models.

In this paper, we make the following contributions:
\begin{itemize}
  \item
  We propose GP-ZKF, a set-based state estimator with a probabilistic consistency guarantee under epistemic and aleatoric uncertainties, in the case where both dynamic and observation models are nonlinear and learned.
  \item
  We formally relate GP-ZKF with its stochastic counterpart, GP-EKF~\cite{ko2009gp}, under nonlinear and learned models.
\end{itemize}

We evaluated GP-ZKF in a simulated pendulum domain and a real-world robot-assisted dressing domain; our results show that GP-ZKF provides not only more consistent, but less conservative set-based estimates than the stochastic baselines (GP-EKF, GP-UKF, and GP-PF~\cite{ko2009gp}).
To the best of our knowledge, ours is the first work with a probabilistic consistency guarantee for state estimation with learned models.

In Sec.~\ref{sec:background}, we provide the background, in Sec.~\ref{sec:system_formulation} we present the system formulation, and in Sec.~\ref{sec:problem_def} we formally define probabilistic consistency---our focal point for this work. We present our method in Sec.~\ref{sec:method}, prove its two main theorems in Sec.~\ref{sec:theory}, and empirically evaluate it in Sec.~\ref{sec:experiment_and_result}.

\section{Background}\label{sec:background}

\begin{figure*}[t]
    \centering
    \vspace{6mm}
    \def\svgwidth{0.975\linewidth}
    \input{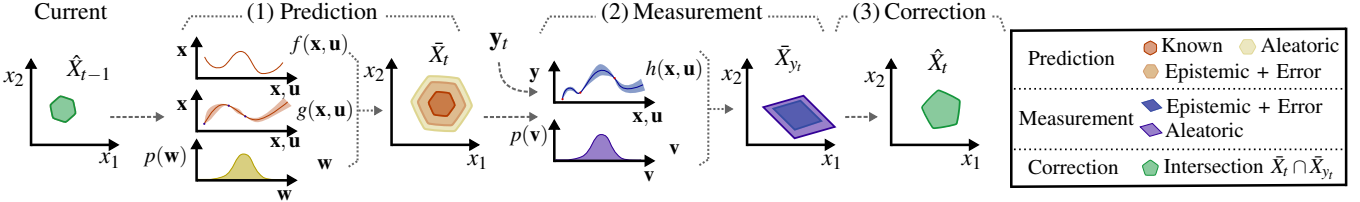}
    \vspace{-1.5mm}
    \caption{A flowchart illustrating the three phases of the set-based estimator, GP-ZKF, at time $t=1,\dots,T$: 
    (1) Prediction: given the previous zonotopic estimate, $\hat{X}_{t-1}$, and the control, $\mathbf{u}_{t-1}$ (omitted in the figure), GP-ZKF produces a dynamics-consistent zonotope, $\bar{X}_{t}$.
    (2) Measurement: given $\bar{X}_{t}$, the new measurement, $\mathbf{y}_{t}$, and the control, $\mathbf{u}_t$ (omitted in the figure), GP-ZKF produces a measurement-consistent polytope, $\bar{X}_{y_t}$.
    (3) Correction: GP-ZKF computes the new zonotopic estimate, $\hat{X}_t$, via $\bar{X}_{t} \cap \bar{X}_{y_t}$.
    In the prediction phase, the dynamics contains the known function, $f(\cdot)$; the learned function, $g(\cdot)$; and the process noise, $\mathbf{w}$ (Eq.~\ref{eq:system:dynamics}).
    In the measurement phase, the observation function contains the learned function, $h(\cdot)$, and the measurement noise, $\mathbf{v}$ (Eq.~\ref{eq:system:measurement}).
    }
    \label{fig:algo_conc}
    \vspace{-2mm}
\end{figure*}

Here, we introduce our nomenclature and briefly provide the background on intervals, zonotopes, Gaussian Process (GP), and a high-probability bound for Gaussian noise.

\textit{Nomenclature}:
Uppercase symbols denote sets, bold upper case symbols denote matrices, lowercase symbols denote scalars, and bold lowercase symbols denote vectors.
Subscripts denote the time and dimension:
\textit{e.g.}, $\mathbf{x}_{t,j}$, for example, is the $j$th dimension of a vector, $\mathbf{x}$, at time $t$. 
Superscripts denote the function a variable is related to---\textit{e.g.}, $\mu^g$ is the mean of the function $g(\cdot)$.
The Minkowski sum, $\bigoplus$, between two sets, $X$ and $Y$, is defined as $X \bigoplus Y \coloneqq \{\mathbf{x}+\mathbf{y} \colon \mathbf{x} \in X, \mathbf{y} \in Y\}$.

\subsection{Intervals and Zonotopes}\label{background:zonotope}
An interval $[a,b]$ is defined as $\{x\colon a \leq x \leq b\}$.
A box $\subset \mathbb{R}^n$ is an interval vector, $([a_1,b_1],\dots,[a_n,b_n])^T$.
A zero-centered box with radius $\mathbf{r} \in \mathbb{R}^n$ is defined as $[\mathbf{0} \pm \mathbf{r}] \subset \mathbb{R}^n$.

Zonotopes are convex polytopes that are centrally symmetric~\cite{scott2016constrained}.
A set $Z \in \mathbb{R}^n$ is a zonotope if there exists $(\mathbf{G}_Z, \mathbf{c}_Z) \in \mathbb{R}^{n \times n_\xi} \times \mathbb{R}^{n}$, such that $Z=\{\mathbf{c}_Z+\mathbf{G}_Z \mathbf{\xi} \colon \mathbf{\xi} \in \mathbb{R}^{n_\xi}, \|\mathbf{\xi}\|_\infty\leq 1\}$.
Here, $\mathbf{c}_Z$ is the center, each column of $\mathbf{G}_Z$ is a generator, and $\mathbf{\xi}$ contains all generator variables.
In this work, we let $(Z)_{c}$ refer to the center, $\mathbf{c}_Z$, and let $(Z)_{G}$ refer to the generators, $\mathbf{G}_Z$.
We compactly denote $Z = \{\mathbf{G}_Z,\mathbf{c}_Z\}$.
Zonotopes are closed under affine transformations and Minkowski sums, both of which can be computed exactly.
Formally, $\mathbf{b} \bigoplus \mathbf{A}Z=\{\mathbf{A}\mathbf{G}_Z, \mathbf{b}+\mathbf{A}\mathbf{c}_Z\}$ and $Z_1 \bigoplus Z_2=\{[\mathbf{G}_{Z_1} \:\: \mathbf{G}_{Z_2}], \mathbf{c}_{Z_1}+\mathbf{c}_{Z_2}\}$.
For details about intervals and zonotopes, please refer to~\cite{alamo2005guaranteed,rego2020guaranteed,  althoff2010reachability}.

\subsection{Gaussian Process and Confidence Intervals}\label{background:gp}

In this work, unknown functions are learned via GPs.
First, for any unknown function $\psi(\mathbf{x}) \colon \mathbb{R}^{n_x} \rightarrow \mathbb{R}^{n_y}$, we assume that along all dimensions, $j \in J$ with $J = \{1,\dots,n_y\}$, the function outputs, $\psi_j(\cdot)$, are independent of each other.
Then, we equivalently reformulate the multi-output function, $\psi(\cdot)$, using a single-output \textit{surrogate} function, $\psi'(\cdot)$~\cite{koller2019arXivlearning}.
We define $\psi'(\cdot) \colon \mathbb{R}^{n_x} \times J \rightarrow \mathbb{R}$, where for each dimension $j$, we have $\psi'(\mathbf{x}, j) = \psi_j(\mathbf{x})$;
this allows us to apply the standard definition of GP with a scalar output and formulate confidence intervals.

We use a GP, denoted by $\mathit{GP}(m^\psi,k^\psi)$, to learn $\psi'(\cdot)$, where the prior mean function, $\mu^\psi(\cdot)$, is set to $0$.
We assume that the given training data points are corrupted by \textit{i.i.d.} Gaussian noise $\mathbf{\upsilon} \in \mathbb{R}^{n_y}$.
Formally, for each dimension $j$, the noise $\mathbf{\upsilon}_j \sim \mathcal{N}(0,\lambda_\upsilon^2)$, where $\lambda_\upsilon \in \mathbb{R}$.
By conditioning the GP on $n$ data points, we obtain that, for each $j$, a posterior mean function, $\mu^\psi_{n,j}(\cdot)$, and a posterior variance function, $(\sigma^\psi_{n,j})^2(\cdot)$.

This work assumes that the true function, $\psi'(\cdot)$, belongs to the reproducing kernel Hilbert space (RKHS) associated with the kernel, $k^\psi$.
The smoothness of $\psi'(\cdot)$ can be measured via its RKHS norm, denoted by $\|\psi'\|_{k^\psi}$~\cite{koller2019arXivlearning};
then, the confidence intervals for the function outputs of $\psi'(\cdot)$---or, equivalently, of $\psi(\cdot)$---can be formulated as follows:
\begin{lemma}[Lemma 1 in \cite{bogunovic2020corruption}]\label{lemma:well_calibrated}
    Let $\delta^\psi \in (0,1)$.
    Fix $\psi$ in RKHS with $\|\psi'\|_{k^\psi} \leq B^\psi$.
    Assume that each dimension of the data of $\psi'(\cdot)$ is corrupted by \textit{i.i.d.} Gaussian noise $\mathbf{\upsilon}_j \sim \mathcal{N}(0,\lambda_\upsilon^2)$.
    Let $\beta_n^\psi=B^\psi + \sqrt{2 (\gamma_{n \cdot n_y}+\log(1/\delta^\psi))}$.
    Then, with probability at least $(1-\delta^\psi)$,
    jointly for all $n \in \mathbb{N}$,  $j = 1,\dots,n_y$, $\mathbf{x} \in \mathbb{R}^{n_x}$, we have that
    $
    |\mu^\psi_{n, j}(\mathbf{x}) - \psi_j(\mathbf{x})| \leq \beta_n^\psi \cdot \sigma^\psi_{n, j}(\mathbf{x})\nonumber
    $.
\end{lemma}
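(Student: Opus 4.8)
The plan is to follow the standard route for RKHS confidence bounds from the kernelized-bandit literature, adapted to the multi-output setting through the surrogate reformulation already introduced above. First I would exploit the reduction of the vector-valued $\psi(\cdot)$ to the scalar surrogate $\psi'(\cdot)$: conditioning the GP on $n$ vector observations is equivalent to conditioning on $n \cdot n_y$ scalar observations of $\psi'$ over the augmented index set $\mathbb{R}^{n_x} \times J$. This reduction is what allows us to invoke the scalar theory, and it is precisely why the maximum information gain enters as $\gamma_{n \cdot n_y}$ rather than $\gamma_n$. After the reduction it suffices to establish the scalar inequality $|\mu_n(\mathbf{x}') - \psi'(\mathbf{x}')| \leq \beta_n^\psi \, \sigma_n(\mathbf{x}')$ uniformly over all augmented inputs $\mathbf{x}' = (\mathbf{x}, j)$, since each coordinate $j$ and input $\mathbf{x}$ corresponds to one such $\mathbf{x}'$.

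The core of the argument is to write the posterior-mean error through the closed form $\mu_n(\mathbf{x}') = \mathbf{k}_n(\mathbf{x}')^\top (\mathbf{K}_n + \lambda_\upsilon^2 \mathbf{I})^{-1} \mathbf{y}_n$ with $\mathbf{y}_n = \mathbf{f}_n + \boldsymbol{\upsilon}_n$, and to split it into a deterministic \emph{bias} term driven by the true function values $\mathbf{f}_n$ and a stochastic \emph{noise} term driven by $\boldsymbol{\upsilon}_n$. For the bias term I would invoke the reproducing property together with the RKHS interpolation inequality, which bounds the noiseless estimation error by $\|\psi'\|_{k^\psi} \, \sigma_n(\mathbf{x}') \leq B^\psi \sigma_n(\mathbf{x}')$; intuitively, $\sigma_n(\mathbf{x}')$ measures exactly how far any unit-norm RKHS interpolant may deviate at $\mathbf{x}'$, and the norm bound $B^\psi$ caps that deviation. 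This contributes the $B^\psi$ summand of $\beta_n^\psi$.

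The hard part is controlling the noise term $\mathbf{k}_n(\mathbf{x}')^\top (\mathbf{K}_n + \lambda_\upsilon^2 \mathbf{I})^{-1} \boldsymbol{\upsilon}_n$ after normalization by $\sigma_n(\mathbf{x}')$, and doing so \emph{jointly over all $n$}. I would treat the noise as a martingale-difference sequence ($\lambda_\upsilon$-sub-Gaussian, since $\upsilon_j \sim \mathcal{N}(0, \lambda_\upsilon^2)$) and apply a self-normalized tail inequality of the method-of-mixtures type: one builds an exponential supermartingale from the noise, integrates it against a Gaussian mixing measure, and applies Ville's maximal inequality to obtain an anytime bound. The log-determinant produced by the integration is exactly the mutual information $\tfrac{1}{2}\log\det(\mathbf{I} + \lambda_\upsilon^{-2}\mathbf{K}_n)$, which is upper-bounded by the maximum information gain $\gamma_{n \cdot n_y}$; together with the $\log(1/\delta^\psi)$ from the tail probability this yields the $\sqrt{2(\gamma_{n \cdot n_y} + \log(1/\delta^\psi))}$ summand. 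Crucially, the supermartingale construction is what simultaneously delivers the ``jointly for all $n$'' guarantee, since it controls the entire trajectory at once rather than requiring a union bound across time steps.

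Finally I would add the two bounds, identify their sum with $\beta_n^\psi$, and observe that the statement over all $j$ and $\mathbf{x}$ follows without an extra union bound, because after the surrogate reduction $(\mathbf{x}, j)$ is a single augmented input over which the self-normalized bound already holds uniformly. I expect the self-normalized concentration step to be the only genuinely delicate part: the exact constants, such as the absence of an explicit $\lambda_\upsilon$ factor multiplying the square root in $\beta_n^\psi$, hinge on the precise normalization chosen for the regularizer and the mixing measure, and matching them to the stated form requires care.
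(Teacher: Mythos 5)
The paper does not actually prove this lemma --- it imports it verbatim (up to notation) from Lemma~1 of \cite{bogunovic2020corruption}, so there is no in-paper proof to compare against. Your sketch is a faithful reconstruction of the standard argument behind that cited result (surrogate reduction to $n\cdot n_y$ scalar observations, bias term bounded by $B^\psi\sigma^\psi_{n,j}$ via the RKHS interpolation property, noise term via a self-normalized method-of-mixtures supermartingale whose log-determinant is absorbed into $\gamma_{n\cdot n_y}$, Ville's inequality for the anytime guarantee), and you correctly flag the only delicate point, namely that the stated $\beta_n^\psi$ omits the explicit sub-Gaussian factor that appears in the source and must be recovered by the choice of regularizer normalization.
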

This lemma states that with high probability, jointly for each dimension $j$, the function output of $\psi_j(\cdot)$ is bounded by a confidence interval centered at the posterior mean prediction $\mu^\psi_{n,j}$.
The scaling, $\beta_n^\psi$, depends upon $\gamma_{n \cdot n_y}$, the information capacity with $n \cdot n_y$ data points for $\psi'(\cdot)$,
which can be bounded when the domain of $\psi'(\cdot)$ is compact~\cite{koller2019arXivlearning,srinivas2012information}.

\subsection{Gaussian Noise Bound}\label{background:gaussian_noise}
We consider \emph{i.i.d.} Gaussian noise, denoted by $\mathbf{\upsilon}_t \in \mathbb{R}^{n_y}$, for $t=1,\dots,T$, where $T \in \mathbb{N}$ denotes a finite time horizon.
With high probability, jointly throughout all time steps, noise can be bounded by a box, $\subset \mathbb{R}^{n_y}$.
Formally:
\begin{lemma}\label{lemma:bound_gaussian_noise}
  Let vectors $\mathbf{\upsilon}_1,\dots,\mathbf{\upsilon}_{T} \in \mathbb{R}^{n_y}$, such that for each time $t = 1, \dots, T$ and dimension $j = 1,\dots,n_y$, the noise $\upsilon_{t,j} \sim \mathcal{N}(0,\lambda_\upsilon^2)$, where $\lambda_\upsilon \in \mathbb{R}$.
  Then, with probability at least $(1-\delta^\upsilon)$, where $\delta^\upsilon \in (0,1)$, jointly for all $t = 1, \dots, T$, we have that
  $\mathbf{\upsilon}_{t} \in \left[0 \pm \sqrt{2} \lambda_\upsilon \sqrt{\ln (T\cdot n_y/\delta^\upsilon)} \right]^{n_y} \subset \mathbb{R}^{n_y}$.
\end{lemma}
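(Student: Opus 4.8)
The plan is to reduce the joint (vector, multi-time) containment statement to a finite collection of scalar tail events and control them all at once with a union bound. First I would observe that, by the definition of a zero-centered box, the event $\mathbf{\upsilon}_t \in [0 \pm b]^{n_y}$ with $b \coloneqq \sqrt{2}\lambda_\upsilon\sqrt{\ln(T\cdot n_y/\delta^\upsilon)}$ is exactly the conjunction of the $n_y$ scalar events $\{|\upsilon_{t,j}| \leq b\}$ over $j=1,\dots,n_y$. Hence the assertion ``jointly for all $t$'' is equivalent to the single event $\bigcap_{t=1}^{T}\bigcap_{j=1}^{n_y}\{|\upsilon_{t,j}| \leq b\}$, and it suffices to show that its complement has probability at most $\delta^\upsilon$.

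Next I would bound the probability of each scalar violation. Since each $\upsilon_{t,j} \sim \mathcal{N}(0,\lambda_\upsilon^2)$, writing $Z = \upsilon_{t,j}/\lambda_\upsilon \sim \mathcal{N}(0,1)$ and invoking the two-sided Gaussian tail inequality $\Pr[|Z| > z] \leq \exp(-z^2/2)$, valid for all $z \geq 0$, I obtain $\Pr[|\upsilon_{t,j}| > b] \leq \exp\!\big(-b^2/(2\lambda_\upsilon^2)\big)$. The value of $b$ is chosen precisely so that $b^2/(2\lambda_\upsilon^2) = \ln(T\cdot n_y/\delta^\upsilon)$, whence each scalar violation has probability at most $\delta^\upsilon/(T\cdot n_y)$.

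Finally I would assemble these estimates through a union bound over the $T\cdot n_y$ scalar events:
\[
\Pr\Bigl[\bigcup_{t=1}^{T}\bigcup_{j=1}^{n_y}\{|\upsilon_{t,j}| > b\}\Bigr]
\leq \sum_{t=1}^{T}\sum_{j=1}^{n_y}\Pr[|\upsilon_{t,j}| > b]
\leq T\cdot n_y \cdot \frac{\delta^\upsilon}{T\cdot n_y} = \delta^\upsilon .
\]
Taking complements gives $\Pr\big[\,\mathbf{\upsilon}_t \in [0 \pm b]^{n_y}\ \text{for all}\ t=1,\dots,T\,\big] \geq 1-\delta^\upsilon$, which is the claim. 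Note that no independence among the $\upsilon_{t,j}$ is needed here, since the union bound depends only on the marginals.

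The only delicate point is the exact constant in the Gaussian tail inequality: the familiar looser bound $\Pr[|Z|>z]\leq 2\exp(-z^2/2)$ would leave an extra factor of two and close the arithmetic only to $2\delta^\upsilon$. I would therefore use the sharper bound $\Pr[|Z|>z] = 2\big(1-\Phi(z)\big) \leq \exp(-z^2/2)$, where $\Phi$ is the standard normal CDF; this can be verified in one line by checking that $e^{-z^2/2}-2(1-\Phi(z))$ vanishes at $z=0$, is first increasing then decreasing, and tends to $0^+$ as $z\to\infty$, hence stays nonnegative for all $z\geq 0$. Everything else is routine.
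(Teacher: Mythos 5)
Your proof is correct and follows essentially the same route as the paper: bound each scalar component $\upsilon_{t,j}$ via a Gaussian tail inequality with probability budget $\delta^{\upsilon}/(T\cdot n_y)$, then take a union bound over all $t$ and $j$. The paper only sketches this (deferring to Lemma 5.1 of Srinivas et al.), and your write-up correctly supplies the details, including the sharper one-sided constant $\Pr[|Z|>z]\leq e^{-z^2/2}$ needed to make the arithmetic close exactly to $\delta^{\upsilon}$.
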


\begin{proof}
The proof is similar to those for Lemma 5.1 in \cite{srinivas2012information} and Lemma 4 in \cite{berkenkamp2019safe}. 
For each time $t=1,\dots,T$ and dimension $j=1,\dots,n_y$, we bound $\mathbf{\upsilon}_{t,j}$ by applying the Gaussian error function with a probability budget, $\delta^\upsilon/(T\cdot n_y)$;
we then obtain the result via a union bound over all $t$ and $j$.
\end{proof}

\section{System Formulation}\label{sec:system_formulation}
We consider a discrete-time dynamical system with finite-horizon $T \in \mathbb{N}$, nonlinear dynamics and observation functions, and additive noises.
Formally, for $t = 1, \dots, T$, the system can be described as follows:
\begin{align}
  &\mathbf{x}_t = d(\mathbf{x}_{t-1}, \mathbf{u}_{t-1}, \mathbf{w}_{t-1})\nonumber\\
  &\quad= f(\mathbf{x}_{t-1}, \mathbf{u}_{t-1}) + g(\mathbf{x}_{t-1}, \mathbf{u}_{t-1}) + \mathbf{w}_{t-1}\label{eq:system:dynamics}\\
  &\mathbf{y}_t = o(\mathbf{x}_{t}, \mathbf{u}_{t}, \mathbf{v}_{t}) = h(\mathbf{x}_{t}, \mathbf{u}_{t}) + \mathbf{v}_{t}\label{eq:system:measurement}
\end{align}
Here, for all $t = 0, \dots, T$, $\mathbf{x}_t \in \mathbb{R}^{n_x}$, $\mathbf{u}_t \in \mathcal{U} \subset \mathbb{R}^{n_u}$, $\mathbf{y}_t \in \mathbb{R}^{n_y}$, $\mathbf{w}_t \in \mathbb{R}^{n_x}$, and $\mathbf{v}_t \in \mathbb{R}^{n_y}$ denote the system state, control, measurement,
process noise, and measurement noise, respectively.
We assume that the domain $\mathcal{U}$ is compact, and each element of $\mathbf{w}_t$ and $\mathbf{v}_t$ are \emph{i.i.d.} Gaussian.
Formally, for each dimension $j$, 
we have $\mathbf{w}_{t,j} \sim \mathcal{N}(0, \lambda_w^2)$ and $\mathbf{v}_{t,j} \sim \mathcal{N}(0, \lambda_v^2)$, where $\lambda_w,\lambda_v \in \mathbb{R}$.
Here, $\mathbf{w}_{t+1}$ and $\mathbf{v}_t$ correspond to $\mathbf{\upsilon}_t$ in Sec.~\ref{background:gp}.
The function $f(\cdot)$ denotes the prior known dynamic model, which can be a parametric physics model.
The functions $g(\cdot)$ and $h(\cdot)$
denote the unknown functions to be learned via GP, and correspond to $\psi(\cdot)$ in Sec.~\ref{background:gp}.

\begin{remark}\label{remark:noise_bounds}
By applying Lemma~\ref{lemma:bound_gaussian_noise}, we construct boxes, denoted by $W \subset \mathbb{R}^{n_x}$ and $V \subset \mathbb{R}^{n_y}$, to bound noises $\mathbf{w}$ and $\mathbf{v}$, respectively.
Formally, given $\delta^w \in (0,1)$, with probability at least $(1-\delta^w)$, jointly for all $t=1,\dots,T$, we have that $\mathbf{w}_{t-1} \in W$.
Given $\delta^v \in (0,1)$, with probability at least $(1-\delta^v)$, jointly for all $t=1,\dots,T$, we have that $\mathbf{v}_{t} \in V$.
\end{remark}


\section{Problem Definition}\label{sec:problem_def}
Our goal is to develop a set-based state estimator that can produce \textit{consistent} estimates.
Let $\hat{X}_{t} \subset \mathbb{R}^{n_x}$ denote a set-based state estimate produced by our algorithm at time $t$.
By assuming that the controls are given, the estimation process, at time $t=1,\dots,T$ can be represented as a recursive function, $\hat{X}_{t} = E(\hat{X}_{t-1}, \mathbf{u}_{t-1}, \mathbf{u}_t, \mathbf{y}_t)$.

When the dynamic and observation models, $d(\cdot)$ and $o(\cdot)$ (respectively), are known, prior arts in set-based state estimation~\cite{alamo2005guaranteed, rego2020guaranteed,combastel2005state} have achieved \textit{strict} consistency guarantees.
In contrast, we focus on a scenario where both $d(\cdot)$ and $o(\cdot)$ are learned with a limited amount of data; hence, we relax the strict consistency and focus on \textit{probabilistic} consistency, or $\delta$-consistency, defined as follows:
\begin{definition}[$\delta$-consistency]\label{def:delta_consistency}
  Given $\delta \in (0, 1)$;
  an initial set-based estimate, $\hat{X}_0 \subset \mathbb{R}^{n_x}$ such that $\mathbf{x}_0 \in \hat{X}_0$;
  a sequence of controls, $\{\mathbf{u}_t\}_{t=0}^{T} \subset \mathcal{U}$;
  and a sequence of measurements, $\{\mathbf{y}_t\}_{t=1}^{T} \subset \mathbb{R}^{n_y}$;
  Then, a state estimator is $\delta$-consistent if
  the sequence of estimates, $\{\hat{X}_t\}_{t=1}^{T}$, computed via $E$, satisfies:
  \begin{equation}
    \Pr[\:\forall t =1 \dots T \colon \mathbf{x}_t \in \hat{X}_t\:] \geq 1-\delta
    \nonumber
  \end{equation}
\end{definition}

This definition states that $\delta$-consistent estimators are able to guarantee that with high probability, jointly for each time step within a finite time horizon, the set-based estimate contains the true state.
Note that:
(1) Def.~\ref{def:delta_consistency} implies that the high-probability consistency guarantee holds \textit{jointly} throughout the finite time horizon, rather than \textit{per time-step} in the form of $\forall t =1 \dots T \colon \Pr[\mathbf{x}_t \in \hat{X}_t] \geq 1-\delta$~\cite{shetty2020predicting}.
(2) Def.~\ref{def:delta_consistency} indicates a filtering (rather than smoothing) problem, as future measurements are never used to estimate past or current states.

\begin{problem}
Design a set-based estimator that guarantees $\delta$-consistency under the epistemic uncertainties in the learned models, $g(\cdot)$ and $h(\cdot)$, and the aleatoric uncertainties in the noises, $\mathbf{w}$ and $\mathbf{v}$.
\end{problem}

\section{Method}\label{sec:method}

Our recursive set-based estimator, GP-ZKF, shown in~\cref{fig:algo_conc}, follows the algorithmic structure of Kalman filter and set-based estimators~\cite{alamo2005guaranteed,le2013zonotope}.
We choose zonotopes to represent the set-based estimates due to their computational advantages in affine transformation and Minkowski sum (see Sec.~\ref{background:zonotope}).

At time step $t=1,\dots,T$, GP-ZKF computes a new zonotopic estimate, $\hat{X}_t \subset \mathbb{R}^{n_x}$ in three phases:
\textit{\textbf{(1) Prediction:}}
Given $\mathbf{u}_{t-1} \in \mathcal{U}$ and $\hat{X}_{t-1} \subset \mathbb{R}^{n_x}$ that contains $\mathbf{x}_{t-1}$ (formally, $\hat{X}_{t-1} \ni \mathbf{x}_{t-1}$), 
GP-ZKF constructs a \textit{dynamics-consistent} zonotope, $\bar{X}_t \subset \mathbb{R}^{n_x}$, to bound the output of the dynamics, $d(\cdot)$, with high probability.
\textit{\textbf{(2) Measurement:}}
Given $\mathbf{u}_{t} \in \mathcal{U}$, $\bar{X}_t$ and a new measurement, $\mathbf{y}_t \in \mathbb{R}^{n_y}$, GP-ZKF constructs a \textit{measurement-consistent} polytope, $\bar{X}_{y_t} \subset \mathbb{R}^{n_x}$, that contains all states consistent with $\mathbf{y}_t$, with high probability.
We choose polytopes to represent the potentially asymmetric set, $\bar{X}_{y_t}$.
\textit{\textbf{(3) Correction:}}
GP-ZKF constructs the new zonotopic estimate, $\hat{X}_t$, via the intersection $\bar{X}_t \cap \bar{X}_{y_t}$.

\subsection{Phase 1: Prediction}\label{ssec:method:prediction}
In the prediction phase, given $\mathbf{u}_{t-1}$ and $\hat{X}_{t-1} \ni \mathbf{x}_{t-1}$, GP-ZKF constructs a \textit{dynamics-consistent} zonotope, $\bar{X}_t$, to bound the outputs of $d(\cdot)$, with high probability.
As defined in~\cref{eq:system:dynamics}, $d(\cdot)$ is decomposed into three components: a known function, $f(\cdot)$; an unknown but to-be-learned function, $g(\cdot)$; and a noise, $\mathbf{w}$.
In this section, we separately formulate bounds for each component and then integrate them into $\hat{X}_{t-1}$ to bound $d(\cdot)$.

\subsubsection{Bounding the Known Function}\label{sssec:method:prediction:known_fcn}
Bounding arbitrary functions is nontrivial.
In this work, we focus on smooth functions by making the following assumption:
\begin{assumption}\label{ass:known_smooth}
(i) $f(\cdot)$ is twice continuously differentiable;
(ii) $f(\cdot)$ is $L_f$-Lipschitz continuous with respect to the $2$-norm; and
(iii) $\| f(\mathbf{x}_0,\mathbf{u}) - \mathbf{x}_0 \|_2 \leq B^f$ for each $\mathbf{x}_0 \in \hat{X}_{0}, \mathbf{u} \in \mathcal{U}$.
\end{assumption}
In this section, we use Assumption~\ref{ass:known_smooth}$(i)$ to bound the outputs of $f(\cdot)$;
we will incorporate Assumption~\ref{ass:known_smooth}$(ii,iii)$ (based on Assumption~3 in \cite{berkenkamp2019safe}) in the following section (Sec.~\ref{sssec:predict:learned_function}).

Assumption~\ref{ass:known_smooth}$(i)$ allows us to directly apply reachability analysis~\cite{althoff2010reachability} to bound the outputs of $f(\cdot)$, given $\mathbf{u}_{t-1}$ and $\hat{X}_{t-1} \ni \mathbf{x}_{t-1}$.
In particular, we first linearize $f(\cdot)$ around a reference point $\bar{\mathbf{x}}_{t-1} \in \mathbb{R}^{n_x}$, which is chosen to be the center of $\hat{X}_{t-1}$.
The linearized function $\bar{f}(\mathbf{x}_{t-1}, \mathbf{u}_{t-1}) = f(\bar{\mathbf{x}}_{t-1}, \mathbf{u}_{t-1}) + J^f_x \cdot (\mathbf{x}_{t-1} - \bar{\mathbf{x}}_{t-1})$, where $J^f_x$ is the Jacobian of $f(\cdot)$ with respect to $\mathbf{x}_{t-1}$, evaluated at $(\bar{\mathbf{x}}_{t-1}, \mathbf{u}_{t-1})$.
Given $\mathbf{u}_{t-1}$ and
$\hat{X}_{t-1} \ni \mathbf{x}_{t-1}$,
the linearization error can be bounded by a zero-centered box that $\subset \mathbb{R}^{n_x}$ (see Prop.~3.7~\cite{althoff2010reachability}).
Formally, for all $\mathbf{x}_{t-1} \in \hat{X}_{t-1}$, $\mathbf{u}_{t-1} \in \mathcal{U}$, we have:
\begin{equation}
  f(\mathbf{x}_{t-1}, \mathbf{u}_{t-1}) - \bar{f}(\mathbf{x}_{t-1}, \mathbf{u}_{t-1}) \in R^f(\hat{X}_{t-1}, \mathbf{u}_{t-1})
  \subset \mathbb{R}^{n_x}
\label{eq:known_fcn_error}
\end{equation}
Here, $R^f(\cdot)$ denotes the function used to compute the box that bounds the error based on $\hat{X}_{t-1}$ and $\mathbf{u}_{t-1}$.

\subsubsection{Bounding the Unknown Function}\label{sssec:predict:learned_function}
The unknown function $g(\cdot)$ is learned via a GP (see Sec.~\ref{background:gp}).
In this section, we formulate a high-probability bound for the output of $g(\mathbf{x}_{t-1},\mathbf{u}_{t-1})$, given $\mathbf{u}_{t-1}$ and $\hat{X}_{t-1} \ni \mathbf{x}_{t-1}$, in the following five steps:
(i) We state certain regularization assumptions about $g(\cdot)$;
(ii) show that the state space is compact, with high probability;
(iii) utilize this compactness to derive a bound for the GP posterior mean, $\mu^g(\cdot)$;
(iv) present a bound for the GP posterior standard deviation, $\sigma^g(\cdot)$;
and
(v) integrate both bounds of $\mu^g(\cdot)$ and $\sigma^g(\cdot)$ into the GP confidence intervals, as introduced in Lemma~\ref{lemma:well_calibrated}, to bound the output of $g(\cdot)$.

\underline{(i) \textit{Regularization assumptions}}:
We denote the kernel for $g(\cdot)$ as $k^g$ and the single-output surrogate function as $g'(\cdot)$ (see Sec.~\ref{background:gp}),
and assume the following:
\begin{assumption}\label{ass:unknown_smooth}
(i) $k^g$ is 2-times continuously differentiable (Def.~4.35 in \cite{steinwart2008support}),
(ii) $k^g$ is bounded by $\|k^g\|_\infty$ (Eq.~4.15 in \cite{steinwart2008support}),
(iii) $k^g$ has bounded derivatives (Assumption~4 in \cite{berkenkamp2019safe}),
and
(iv) $\|g'\|_{k^g} \leq B^g$ (see Sec.~\ref{background:gp}).
\end{assumption}
This assumption states that $k^g(\cdot,\cdot)$ and $g(\cdot)$ are smooth and bounded;
common smooth kernels, such as square exponential and rational quadratic kernels, satisfy this assumption.
Assumption \ref{ass:unknown_smooth} $(i,iii,iv)$ implies that $g(\cdot)$ is $L_g$-Lipschitz continuous with respect to the $2$-norm (Cor.~2 in \cite{berkenkamp2019safe}).

\underline{(ii) \textit{Compact state space}}:
The infinite support of the process noise, $\mathbf{w}$, makes the state space of our system unbounded.
Taking a step back, the following lemma states that, with high probability, the \textit{reachable} state space is compact during the estimation processes:

\begin{lemma}\label{lemma:compact_state_space}
  With Asm.~\ref{ass:known_smooth} $(ii,iii)$,
  Asm.~\ref{ass:unknown_smooth},
  and our Gaussian Asm. on $\mathbf{w}$ (Sec.~\ref{sec:system_formulation}),
  and
  given initial set $\hat{X}_0 \ni \mathbf{x}_0$,
  then with probability at least $(1-\delta^w)$,
  jointly
  for all $t = 0,\dots,T$,
  we have that $\mathbf{x}_t \in \mathcal{X}$,
  where $\mathcal{X}$ is a box $\subset \mathbb{R}^{n_x}$ (compact) that depends upon $\hat{X}_0$, $T$, $\lambda^w$, $\delta^w$,
  $L^f$, $L^g$, $B^f$, $B^g$, $\|k^g\|_{\infty}$, $n_x$.
\end{lemma}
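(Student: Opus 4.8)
The plan is to condition on a single high-probability event for the process noise and then push a deterministic recursive norm bound through the finite horizon; the $(1-\delta^w)$ factor will come entirely from that event. First I would apply \cref{lemma:bound_gaussian_noise} to $\mathbf{w}$ (with $n_y$ replaced by $n_x$): with probability at least $1-\delta^w$, jointly for all $t$, $\mathbf{w}_{t-1} \in [0 \pm b_w]^{n_x}$ with $b_w = \sqrt{2}\lambda_w\sqrt{\ln(T n_x/\delta^w)}$, so that $\|\mathbf{w}_{t-1}\|_2 \leq \sqrt{n_x}\,b_w$. All of the following bounds are stated on this event, which already carries the joint-over-$t$ quantifier, so no further union bound over the recursion is needed.

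Next I would assemble per-step bounds for the two deterministic components using the true initial state $\mathbf{x}_0 \in \hat{X}_0$ as a reference point. For the learned term, the reproducing property together with $\|g'\|_{k^g} \leq B^g$ (\cref{ass:unknown_smooth}(iv)) and the kernel bound $\|k^g\|_\infty$ (\cref{ass:unknown_smooth}(ii)) yields the pointwise estimate $|g_j(\mathbf{x},\mathbf{u})| \leq B^g\sqrt{\|k^g\|_\infty}$, hence $\|g(\mathbf{x}_0,\mathbf{u}_{t-1})\|_2 \leq \sqrt{n_x}\,B^g\sqrt{\|k^g\|_\infty}$, while \cref{ass:unknown_smooth} also makes $g$ $L_g$-Lipschitz. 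For the known term, \cref{ass:known_smooth}(iii) gives $\|f(\mathbf{x}_0,\mathbf{u}_{t-1}) - \mathbf{x}_0\|_2 \leq B^f$ and \cref{ass:known_smooth}(ii) supplies $L_f$-Lipschitzness.

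Then I would set up the recursion on $r_t \coloneqq \|\mathbf{x}_t - \mathbf{x}_0\|_2$. Starting from \cref{eq:system:dynamics} and adding and subtracting $f(\mathbf{x}_0,\mathbf{u}_{t-1})$ and $g(\mathbf{x}_0,\mathbf{u}_{t-1})$, the triangle inequality and the bounds above give $r_t \leq (L_f+L_g)\,r_{t-1} + C$, where $C = B^f + \sqrt{n_x}\,B^g\sqrt{\|k^g\|_\infty} + \sqrt{n_x}\,b_w$ and $r_0 = 0$. Unrolling this linear recursion produces $r_t \leq C\sum_{i=0}^{t-1}(L_f+L_g)^i$, which is nondecreasing in $t$ and therefore bounded by $R \coloneqq C\sum_{i=0}^{T-1}(L_f+L_g)^i$ uniformly over $t = 0,\dots,T$; this sum is finite precisely because the horizon $T$ is finite. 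Finally, letting $\mathbf{c}_0$ and $\rho_0$ denote the center and $\ell_2$-radius of $\hat{X}_0$, I would use $\|\mathbf{x}_t - \mathbf{c}_0\|_\infty \leq \|\mathbf{x}_t - \mathbf{c}_0\|_2 \leq r_t + \rho_0 \leq R + \rho_0$ to conclude $\mathbf{x}_t \in \mathcal{X} \coloneqq [\mathbf{c}_0 \pm (R+\rho_0)]^{n_x}$, a compact box whose radius depends exactly on $\hat{X}_0$, $T$, $\lambda_w$, $\delta^w$, $L_f$, $L_g$, $B^f$, $B^g$, $\|k^g\|_\infty$, and $n_x$.

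The main obstacle is that the Gaussian process noise has unbounded support, so no strict (probability-one) compact enclosure can exist; this is resolved by working on the event of \cref{lemma:bound_gaussian_noise}, which is exactly why the guarantee is only $(1-\delta^w)$ and why the box radius must grow with $\ln(T n_x/\delta^w)$. A secondary point to watch is that $L_f + L_g$ may exceed $1$, so $r_t$ can grow geometrically in $t$; but since we only require a finite-horizon enclosure, no contractivity or stability of the dynamics is needed, and the geometric sum truncated at $T$ keeps $R$ finite.
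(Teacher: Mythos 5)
Your proof is correct and follows essentially the same route as the paper, which simply delegates the argument to Lemma~44 of \cite{berkenkamp2019safe} after combining $\hat{X}_0$ with the noise box $W$ from \cref{remark:noise_bounds}; your recursive Lipschitz bound unrolled over the finite horizon on the joint noise event is precisely the content of that cited lemma, made self-contained. The only cosmetic caveat is that with the convention of Eq.~4.15 in \cite{steinwart2008support}, $\|k^g\|_\infty$ already denotes $\sup_z\sqrt{k^g(z,z)}$, so the pointwise bound reads $B^g\|k^g\|_\infty$ rather than $B^g\sqrt{\|k^g\|_\infty}$, which does not affect the argument.
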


Lemma~\ref{lemma:compact_state_space} can be proven by integrating $\hat{X}_0$ and our noise bound, $W$ (\cref{remark:noise_bounds} in Sec.~\ref{sec:system_formulation}), into Lemma 44 in \cite{berkenkamp2019safe}.
The intuition behind the proof is that our Assumption~\ref{ass:known_smooth} $(ii,iii)$, Assumption~\ref{ass:unknown_smooth}, and the Gaussian nature of $\mathbf{w}$ (with a probability budget $\delta^w$) ``prevent'' $f(\cdot)$, $g(\cdot)$, and $\mathbf{w}$ (respectively) from deviating the state infinitely far away from $\hat{X}_0$.

\underline{(iii) \textit{Bound posterior mean}}:
Next, we derive a bound for the posterior mean, $\mu^g(\cdot)$.
We first linearize $\mu^g(\cdot)$ around $\bar{\mathbf{x}}_{t-1}$, which is chosen to be the center of $\hat{X}_{t-1}$.
The linearized function is $\bar{\mu}^g(\mathbf{x}_{t-1}, \mathbf{u}_{t-1}) = \mu^g(\bar{\mathbf{x}}_{t-1}, \mathbf{u}_{t-1}) + J^{\mu_g}_x \cdot (\mathbf{x}_{t-1} - \bar{\mathbf{x}}_{t-1})$, where $J^{\mu_g}_x$ is the Jacobian of $\mu^g$.
Lemma~\ref{lemma:compact_state_space} implies that the domain of $\mu^g(\cdot)$ during the estimation process is compact, with high probability.
Together with Assumption~\ref{ass:unknown_smooth} $(i)$, we obtain that, with high probability, for each $j=1,\dots,n_x$, the mean $\mu^g_j$ is twice continuously differentiable with $L^g_{\nabla \mu}$-Lipschitz gradient.
We then follow the steps in V(A)2) in~\cite{koller2019arXivlearning} to derive a bound for the linearization error, as follows:
\begin{equation}
\hspace{-3mm}
\resizebox{.94\hsize}{!}{$
  |\mu_j^g(\mathbf{x}_{t-1}, \mathbf{u}_{t-1}) - \bar{\mu}_j^g(\mathbf{x}_{t-1}, \mathbf{u}_{t-1})| \leq \frac{1}{2} L^g_{\nabla \mu} \cdot \|\mathbf{x}_{t-1} - \bar{\mathbf{x}}_{t-1}\|_2^2
  \label{eq:predict:learned_function:mean}
$}
\end{equation}
%
Here, with probability at least $(1-\delta^w)$, this bound holds uniformly for all $t=1,\dots,T$, $j=1,\dots,n_x$, $\mathbf{x}_{t-1} \in \mathcal{X}$, $\mathbf{u}_{t-1} \in \mathcal{U}$, where the probability $(1-\delta^w)$ is derived from Lemma~\ref{lemma:compact_state_space}.

\underline{(iv) \textit{Bound standard deviation}}:
We approximate the standard deviation, $\sigma^g(\mathbf{x}_{t-1},\mathbf{u}_{t-1})$, using $\sigma^g(\bar{\mathbf{x}}_{t-1},\mathbf{u}_{t-1})$.
Assumption~\ref{ass:unknown_smooth} $(i, ii, iii)$ allows the use of (Eq.~21,22 in \cite{lederer2019uniform}) to bound the approximation error.
Formally, with $L^g_{\sigma} \in \mathbb{R}$,
we have for all $t=1,\dots,T$, $j=1,\dots,n_x$, $\mathbf{x}_{t-1} \in \mathbb{R}^{n_x}$, $\mathbf{u}_{t-1} \in \mathcal{U}$:
\begin{equation}
\hspace{-2.25mm}
\resizebox{.93\hsize}{!}{$
  |\sigma^g_j(\mathbf{x}_{t-1},\mathbf{u}_{t-1}) - \sigma^g_j(\bar{\mathbf{x}}_{t-1},\mathbf{u}_{t-1})| \leq L^g_{\sigma} \cdot \|\mathbf{x}_{t-1} - \bar{\mathbf{x}}_{t-1}\|_2^{1/2}
  \label{eq:predict:learned_function:var}
$}
\end{equation}

\underline{(v) \textit{Bound combination}}:
Assumption~\ref{ass:unknown_smooth} $(iv)$ allows the use of Lemma~\ref{lemma:well_calibrated} to construct confidence intervals for $g(\cdot)$ with $\delta^g \in (0,1)$.
Via a union bound, we combine the confidence intervals and our bounds for the mean and standard deviation to bound the error, $|g(\cdot)-\bar{\mu}^g(\cdot)|$, as follows:
\begin{equation}
\hspace{-3.75mm}
\begin{aligned}
  & \hspace{2mm}|g_j(\mathbf{x}_{t-1}, \mathbf{u}_{t-1}) - \bar{\mu}^g_j(\mathbf{x}_{t-1}, \mathbf{u}_{t-1})|\\
  & \hspace{3mm} \leq \hspace{-6mm}
  \underbrace{\frac{1}{2} L^g_{\nabla \mu} \cdot \epsilon^2}_\text{Linearization error of $\mu^g_j(\cdot)$ } \hspace{-2mm}+ 
  \underbrace{\frac{}{}\beta^g \cdot \sigma^g_{j}(\bar{\mathbf{x}}_{t-1}, \mathbf{u}_{t-1})
  + \beta^g \cdot L^g_{\sigma} \cdot \epsilon^{1/2}}_\text{Epistemic uncertainty $\bigoplus$ Approx. error of $\sigma^g_j(\cdot)$
  }
  \hspace{-1mm}
  \end{aligned}
\label{eq:predict:learned_function:confidence_bounds}
\end{equation}
Here, $\epsilon = \|\hat{X}_{t-1}-\bar{\mathbf{x}}_{t-1}\|_2$ is the norm of the translated zonotope.
(We provide the derivation in the Appendix.)
Let $R^g(\hat{X}_{t-1}, \mathbf{u}_{t-1})$ be a zero-centered box $\subset \mathbb{R}^{n_x}$, whose radius along each dimension $j=1,\dots,n_x$ is the RHS of Eq.~\ref{eq:predict:learned_function:confidence_bounds}.
Thus, with probability at least $(1-\delta^g-\delta^w)$, jointly for all $t=1,\dots,T$, $\hat{X}_{t-1} \subset \mathcal{X}$, $\mathbf{u}_{t-1} \in \mathcal{U}$, $\mathbf{x}_{t-1} \in \hat{X}_{t-1}$, we have that
\begin{equation}
  g(\mathbf{x}_{t-1}, \mathbf{u}_{t-1}) - \bar{\mu}^g(\mathbf{x}_{t-1}, \mathbf{u}_{t-1}) \in R^g(\hat{X}_{t-1}, \mathbf{u}_{t-1})
\label{eq:predict:learned_function:bound}
\end{equation}

\subsubsection{The Prediction Phase}\label{sssec:predict:predict}
We define a linear function, $\bar{d}(\cdot)$, by summing the linearized $f(\cdot)$ and $\mu_g(\cdot)$ as $\bar{d}(\mathbf{x}_{t-1}, \mathbf{u}_{t-1}) = \bar{f}(\mathbf{x}_{t-1}, \mathbf{u}_{t-1}) + \bar{\mu}^g(\mathbf{x}_{t-1}, \mathbf{u}_{t-1})$.
We define a box, $R^d(\hat{X}_{t-1}, \mathbf{u}_{t-1}) \subset \mathbb{R}^{n_x}$, by Minkowski summing the error bounds of
$f(\cdot)$ (Eq.~\ref{eq:known_fcn_error}),
$g(\cdot)$ (Eq.~\ref{eq:predict:learned_function:bound}),
and $w$ (\cref{remark:noise_bounds} in Sec.~\ref{sec:system_formulation}).
Then, with probability at least $(1-\delta^g-\delta^w)$, the error between $d(\cdot)$ and $\bar{d}(\cdot)$ can be bounded by $R^d(\cdot)$ as follows: 
\begin{equation}
\begin{aligned}
  &d(\mathbf{x}_{t-1}, \mathbf{u}_{t-1}, \mathbf{w}_{t-1}) - \bar{d}(\mathbf{x}_{t-1}, \mathbf{u}_{t-1})
  \in
  R^d(\hat{X}_{t-1}, \mathbf{u}_{t-1})
  \\
  & \coloneqq \hspace{-2mm}
  \underbrace{R^f(\hat{X}_{t-1}, \mathbf{u}_{t-1})}_{\text{Linearization err. of } f(\cdot)}
  \hspace{-1mm}
  \bigoplus
  \hspace{-1mm}
  \underbrace{R^g(\hat{X}_{t-1}, \mathbf{u}_{t-1})}_{
    \substack{\text{Epistemic } \bigoplus \text{ Lin. err. of } \mu^g(\cdot)\\
              \bigoplus \text{Approx. err. of } \sigma^g(\cdot)}}
  \bigoplus
  \underbrace{W}_{\text{Aleatoric}}
\end{aligned}
\label{eq:predict:bounds}
\vspace{-1mm}
\end{equation}
Given that $\bar{d}$ is a linear function, $\mathbf{u}_{t-1}$ is known, and $\mathbf{x}_{t-1} \in \hat{X}_{t-1}$, the output $\bar{d}(\mathbf{x}_{t-1}, \mathbf{u}_{t-1})$ is bounded by a zonotope obtained by linearly transforming $\hat{X}_{t-1}$.
Then, by Minkowski-summing this transformed zonotope and $R^d(\hat{X}_{t-1}, \mathbf{u}_{t-1})$, we obtain another zonotope, denoted by $\bar{X}_t$, that bounds the output $d(\mathbf{x}_{t-1}, \mathbf{u}_{t-1}, \mathbf{w}_{t-1})$ (see Sec.~\ref{background:zonotope}).
Let $D(\hat{X}_{t-1}, \mathbf{u}_{t-1})$ denote the function to compute $\bar{X}_t$;
we now summarize the steps described throughout this section in the following lemma:
\begin{lemma}\label{lemma:prediction_full}
  Let $\delta^g, \delta^w \in (0,1)$.
  For all $n \in \mathbb{N}$, we choose $\beta_n^g$ according to Lemma~\ref{lemma:well_calibrated}.
  Given an initial set $\hat{X}_{n,0} \ni \mathbf{x}_{n,0}$,
  then, with probability at least $(1-\delta^g-\delta^w)$,
  the following holds jointly for all $n \in \mathbb{N}$, $t=1,\dots,T$:
  \begin{itemize}
    \item[$(i)$] $d(\mathbf{x}_{n,t-1}, \mathbf{u}_{n,t-1}, \mathbf{w}_{n,t-1}) \in \bar{X}_{n,t} = D(\hat{X}_{n,t-1},\mathbf{u}_{n,t-1})$,\\
    for all $\hat{X}_{n,t-1} \subset \mathcal{X}$, $\mathbf{x}_{n,t-1} \in \hat{X}_{n,t-1}$, $\mathbf{u}_{n,t-1} \in \mathcal{U}$,\\
    where $\mathbf{w}_{n,t-1}$ is the process noise as assumed in Sec.~\ref{sec:system_formulation}.
    \item[$(ii)$] $\mathbf{x}_{n,t} \in \mathcal{X}$, $\mathbf{x}_{n,0} \in \mathcal{X}$.
  \end{itemize}
\end{lemma}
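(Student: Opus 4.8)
The plan is to assemble the three component bounds derived in Sec.~\ref{ssec:method:prediction}---for the known function $f(\cdot)$, the learned function $g(\cdot)$, and the process noise $\mathbf{w}$---under a single joint high-probability event, and then propagate them through the affine map $\bar{d}(\cdot)$ using the zonotope closure properties of Sec.~\ref{background:zonotope}. First I would isolate the two genuinely stochastic events. The first is the GP calibration event of Lemma~\ref{lemma:well_calibrated}, which holds with probability at least $(1-\delta^g)$ and, crucially, holds \emph{jointly} for all $n \in \mathbb{N}$, all $j$, and all $\mathbf{x}$; this ``anytime'' uniformity is what lets the recursion in the main theorem reuse the same event at every data size $n$. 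The second is the process-noise event of Lemma~\ref{lemma:bound_gaussian_noise} (instantiated as $W$ in Remark~\ref{remark:noise_bounds}), which holds with probability at least $(1-\delta^w)$ jointly for all $t=1,\dots,T$. The key bookkeeping observation is that the compactness of the reachable set $\mathcal{X}$ in Lemma~\ref{lemma:compact_state_space} is itself a \emph{consequence} of the noise event (it is obtained by folding $W$ into the cited bound of~\cite{berkenkamp2019safe}), so it consumes no probability budget beyond $\delta^w$.

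Next I would condition on the intersection of these two events, which by a union bound has probability at least $(1-\delta^g-\delta^w)$, and argue deterministically within it. Writing the residual as $d(\mathbf{x}_{t-1},\mathbf{u}_{t-1},\mathbf{w}_{t-1}) - \bar{d}(\mathbf{x}_{t-1},\mathbf{u}_{t-1}) = \big(f-\bar{f}\big) + \big(g-\bar{\mu}^g\big) + \mathbf{w}_{t-1}$, I would bound the three summands respectively by $R^f$ (Eq.~\ref{eq:known_fcn_error}, a deterministic reachability bound valid once $\hat{X}_{t-1}\subset\mathcal{X}$), by $R^g$ (Eq.~\ref{eq:predict:learned_function:bound}, which already combines the GP confidence interval with the linearization error of $\mu^g$ and the approximation error of $\sigma^g$ on the compact domain), and by $W$ (the noise event). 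Minkowski-summing gives $d-\bar{d}\in R^f \bigoplus R^g \bigoplus W = R^d(\hat{X}_{t-1},\mathbf{u}_{t-1})$, which is exactly Eq.~\ref{eq:predict:bounds}. Since $\bar{d}$ is affine and $\hat{X}_{t-1}$ is a zonotope, $\bar{d}(\hat{X}_{t-1},\mathbf{u}_{t-1})$ is again a zonotope; adding the box $R^d$ yields $\bar{X}_t=D(\hat{X}_{t-1},\mathbf{u}_{t-1})$, and because $\mathbf{x}_{t-1}\in\hat{X}_{t-1}$ the point $\bar{d}(\mathbf{x}_{t-1},\mathbf{u}_{t-1})$ lies in $\bar{d}(\hat{X}_{t-1},\mathbf{u}_{t-1})$, establishing part $(i)$ for every admissible $\hat{X}_{t-1}\subset\mathcal{X}$, $\mathbf{x}_{t-1}\in\hat{X}_{t-1}$, $\mathbf{u}_{t-1}\in\mathcal{U}$.

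Part $(ii)$ then comes for free on the same event: $\mathbf{x}_{n,t}\in\mathcal{X}$ jointly for all $t=0,\dots,T$ is precisely the conclusion of Lemma~\ref{lemma:compact_state_space}, and $\mathbf{x}_{n,0}\in\mathcal{X}$ is the $t=0$ base case supplied by the initial-set hypothesis $\hat{X}_{n,0}\ni\mathbf{x}_{n,0}$ together with $\hat{X}_{n,0}\subset\mathcal{X}$. Because both underlying events are uniform in $n$ and $t$, the full statement holds jointly over all $n\in\mathbb{N}$ and $t=1,\dots,T$ at the single budget $\delta^g+\delta^w$.

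The main obstacle I anticipate is the probability accounting rather than any single inequality: the proof must make explicit that compactness and the noise bound are the \emph{same} event (so $\delta^w$ is not paid twice), and that the union bound over the countably many data sizes $n$ does \emph{not} inflate the failure probability---this is exactly what the jointly-over-$n$ form of Lemma~\ref{lemma:well_calibrated} and the jointly-over-$t$ form of Lemma~\ref{lemma:bound_gaussian_noise} are designed to deliver. A secondary care point is verifying that the deterministic geometric error terms inside $R^f$ and $R^g$ (the Lipschitz-gradient constant $L^g_{\nabla \mu}$, the standard-deviation constant $L^g_\sigma$, and the reachability remainder) are finite because they are evaluated over the compact box $\mathcal{X}$, so that conditioning on the noise event is precisely what legitimizes every bound invoked in the deterministic step.
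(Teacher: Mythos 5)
Your proposal is correct and follows essentially the same route as the paper's (much terser) proof: treat the noise bound $W$ and the compactness of $\mathcal{X}$ (Lemma~\ref{lemma:compact_state_space}) as a single event with budget $\delta^w$, union-bound it with the GP confidence-interval event of Lemma~\ref{lemma:well_calibrated} at budget $\delta^g$, and then deterministically assemble $R^f \bigoplus R^g \bigoplus W$ into $\bar{X}_{n,t}$ via the zonotope closure properties. Your write-up simply makes explicit the bookkeeping that the paper leaves implicit.
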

\begin{proof}
The noise bound, $W$ (\cref{remark:noise_bounds}), and Lemma~\ref{lemma:compact_state_space} hold jointly, with probability at least $(1-\delta^w)$.
Then, by applying a union bound to combine the above result with the confidence intervals of $g(\cdot)$ (Lemma~\ref{lemma:well_calibrated}), we arrive at the bound, $\bar{X}_{n,t}$.
\end{proof}
In conclusion, the prediction phase computes the dynamic-consistent zonotope, $\bar{X}_t = D(\hat{X}_{t-1},\mathbf{u}_{t-1}) \subset \mathbb{R}^{n_x}$, which bounds the output of the dynamic function, $d(\cdot)$ (Lemma~\ref{lemma:prediction_full}).

\subsection{Phase 2: Measurement}\label{ssec:method:measurement}
In the measurement phase, 
given $\bar{X}_t$ from Lemma~\ref{lemma:prediction_full} and $\mathbf{u}_{t}$,
GP-ZKF constructs a \textit{measurement-consistent} polytope~\cite{le2013zonotope}, $\bar{X}_{y_t}$, to bound all possible states consistent with the new measurement, $\mathbf{y}_t$, with high probability.

The unknown function $h(\cdot)$ is learned via GP (see Sec.~\ref{background:gp}).
We denote the single-output surrogate function by $h'(\cdot)$, and the kernel by $k^h$.
Similar to Assumption~\ref{ass:unknown_smooth} about $k^g$ and $g(\cdot)$, we make the following assumption about $k^h$ and $h(\cdot)$:
\begin{assumption}\label{ass:obs:unknown_smooth}
(i) $k^h$ is 2-times continuously differentiable (Def.~4.35 in \cite{steinwart2008support}),
(ii) $k^h$ is bounded by $\|k^h\|_\infty$ (Eq.~4.15 in \cite{steinwart2008support}),
(iii) $k^h$ has bounded derivatives (Assumption~4 in \cite{berkenkamp2019safe}),
and
(iv) $\|h'\|_{k^h} \leq B^h$ (see Sec.~\ref{background:gp}).
\end{assumption}

To bound $h(\cdot)$, we follow the same steps described in Sec.~\ref{sssec:predict:learned_function} to bound $g(\cdot)$.
\underline{\textit{Bound posterior mean}}:
We linearize $\mu^h(\cdot)$ around $\bar{\mathbf{x}}_{t} =$ the center of $\bar{X}_{t}$, and obtain $\bar{\mu}^h(\mathbf{x}_{t}, \mathbf{u}_{t}) = \mu^h(\bar{\mathbf{x}}_{t}, \mathbf{u}_{t}) + J^{\mu_h}_x \cdot (\mathbf{x}_{t} - \bar{\mathbf{x}}_{t})$.
Lemma~\ref{lemma:prediction_full} $(ii)$ implies the compactness of the domain of $\mu^h(\cdot)$ within the estimation process, which leads to a linearization error bound (similar to Eq.~\ref{eq:predict:learned_function:mean}).
\underline{\textit{Bound standard deviation}}:
With Assumption~\ref{ass:obs:unknown_smooth}, we derive a bound for $\sigma^h(\cdot)$ (similar to Eq.~\ref{eq:predict:learned_function:var}).
\underline{\textit{Bound combination}}:
Given $\delta^h \in (0,1)$, Lemma~\ref{lemma:well_calibrated} allows us to construct confidence intervals for $h(\cdot)$.
Via a union bound, we determine that the confidence intervals and noise bound $\mathbf{v_t} \in V$ (see ~\cref{remark:noise_bounds}) jointly hold with probability at least $(1-\delta^h-\delta^v)$.
Then, similar to $R^g$ in Eq.~\ref{eq:predict:learned_function:bound}, we obtain a box, $R^h(\bar{X}_{t}, \mathbf{u}_{t}) \subset \mathbb{R}^{n_y}$.
Thus, with probability at least $(1-\delta^g-\delta^w)(1-\delta^h-\delta^v)$,
jointly for all $t=1,\dots,T$, $\bar{X}_{t} \subset \mathcal{X}$, $\mathbf{u}_{t} \in \mathcal{U}$, $\mathbf{x}_{t} \in \bar{X}_{t}$, we have that
\begin{equation}
  h(\mathbf{x}_{t}, \mathbf{u}_{t}) - \bar{\mu}^h(\mathbf{x}_{t}, \mathbf{u}_{t}) \in R^h(\bar{X}_{t}, \mathbf{u}_{t})
\label{eq:measurement:learned_function:bound}
\end{equation}
Here, the product rule, $(1-\delta^g-\delta^w)(1-\delta^h-\delta^v)$, results from the assumption that noises $\mathbf{w}$ and $\mathbf{v}$ are independent with each other (Sec.~\ref{sec:system_formulation}).
By expanding $\bar{\mu}^h$ and then combining the noise bound $V$ (\cref{remark:noise_bounds}) with Eq.~\ref{eq:measurement:learned_function:bound}, we obtain that
with probability at least $(1-\delta^g-\delta^w)(1-\delta^h-\delta^v)$,
jointly for all $t=1,\dots,T$, $\bar{X}_{t} \subset \mathcal{X}$, $\mathbf{u}_{t} \in \mathcal{U}$, $\mathbf{x}_{t} \in \bar{X}_{t}$,
the following holds:
\begin{equation}
\begin{aligned}
  &\bar{\mu}^h(\mathbf{x}_{t}, \mathbf{u}_{t}) - o(\mathbf{x}_{t}, \mathbf{u}_{t}, \mathbf{v}_{t})
  \\
  & = \mu^h(\bar{\mathbf{x}}_{t}, \mathbf{u}_{t}) + J^{\mu_h}_x \cdot (\mathbf{x}_{t} - \bar{\mathbf{x}}_{t})
  - h(\mathbf{x}_{t}, \mathbf{u}_{t}) - \mathbf{v}_{t}
  \\
  &\quad\in R^o(\bar{X}_{t}, \mathbf{u}_{t})
  \coloneqq
  \underbrace{R^h(\bar{X}_{t}, \mathbf{u}_{t})}_{
    \substack{\text{Epistemic } \bigoplus \text{ Lin. err. of } \mu^h(\cdot)\\
            \bigoplus \text{Approx. err. of } \sigma^h(\cdot)}
    }
  \bigoplus
  \underbrace{V}_{\text{Aleatoric}}
%
\end{aligned}
\label{eq:measurement:bounds}
\end{equation}

Given a new measurement, $\mathbf{y}_t = o(\mathbf{x}_{t}, \mathbf{u}_{t}, \mathbf{v}_{t}) \in \mathbb{R}^{n_y}$ (Eq.~\ref{eq:system:measurement}), the measurement-consistent states must satisfy the bound in Eq.~\ref{eq:measurement:bounds}.
Hence, we equivalently represent the bound in Eq.~\ref{eq:measurement:bounds} as a polytope, $\bar{X}_{y_t} \subset \mathbb{R}^{n_x}$, with $\mathbf{x}_{t}$ as the variable, as follows:
\begin{equation}
\hspace{-2mm}
\resizebox{.94\hsize}{!}{$
\begin{aligned}
  \bar{X}_{y_t} =
  \{\mathbf{x}_t \in \mathbb{R}^{n_x} \colon
  &J^{\mu_h}_x \cdot \mathbf{x}_{t} - [\mathbf{y}_t - \mu^h(\bar{\mathbf{x}}_{t}, \mathbf{u}_{t}) + J^{\mu_h}_x \cdot \bar{\mathbf{x}}_{t}] \in R^o(\bar{X}_{t}, \mathbf{u}_{t})
  \}
%
\end{aligned}
$}
\hspace{-3mm}
\label{eq:measurement:polytope}
\end{equation}

The output of the measurement step is the measurement-consistent polytope, $\bar{X}_{y_t} \subset \mathbb{R}^{n_x}$.
Let $O^{inv}(\bar{X}_{t}, \mathbf{u}_t, \mathbf{y}_t)$ denote the function to compute $\bar{X}_{y_t}$, where the superscript $inv$ emphasizes that $O^{inv}$ is the ``inverse'' of our observation model, $o(\cdot)$.

\subsection{Phase 3: Correction}\label{ssec:method:correction}
In the correction phase, the goal is to construct a zonotope $\hat{X}_t$ as the intersection $\bar{X}_t \cap \bar{X}_{y_t}$, where $\bar{X}_t$ is formulated in Lemma~\ref{lemma:prediction_full} and $\bar{X}_{y_t}$ is defined in Eq.~\ref{eq:measurement:polytope}.
Note that the bound in Eq.~\ref{eq:measurement:bounds} requires $\bar{X}_{t} \subset \mathcal{X}$; therefore, GP-ZKF instead computes the intersection $\bar{X}_t \cap \bar{X}_{y_t} \cap \mathcal{X}$.
This intersection cannot be computed exactly, and hence is outer-approximated by $\hat{X}_t$.
The new zonotope, $\hat{X}_t$, is potentially less conservative than $\bar{X}_{t}$ and $\bar{X}_{y_t}$ in bounding the possible states, as $\hat{X}_t$ takes into account both dynamics and measurements.
Formally:
%
\begin{lemma}\label{lemma:correction_full}
  Let $\delta^g, \delta^w, \delta^h, \delta^v \in (0,1)$.
  For all $n^g,n^h \in \mathbb{N}$, we choose $\beta_{n^g}^g, \beta_{n^h}^h$ according to Lemma~\ref{lemma:well_calibrated}\footnote{We omit the subscripts $n^g,n^h$ for every variable in this lemma for clarity.}.
  Given an initial set $\hat{X}_{0} \ni \mathbf{x}_{0}$,
  then, with probability at least $(1-\delta^g-\delta^w)(1-\delta^h-\delta^v)$,
  jointly for all $n^g,n^h \in \mathbb{N}$,
  $t=1,\dots,T$,
  $\hat{X}_{t-1} \subset \mathcal{X}$,
  $\mathbf{x}_{t-1} \in \hat{X}_{t-1}$,
  $\mathbf{u}_{t-1}, \mathbf{u}_{t} \in \mathcal{U}$,
  and $\mathbf{y}_{t} \in \mathbb{R}^{n_y}$,
  we have the following:
  \begin{equation}
      d(\mathbf{x}_{t-1}, \mathbf{u}_{t-1}, \mathbf{w}_{t-1})
      \in
      \left(\bar{X}_t \cap \bar{X}_{y_t} \cap \mathcal{X}\right)
      \subset \hat{X}_{t}
    \label{eq:lemma:correction_full}
  \end{equation}
  where $\mathbf{w}_{n,t-1}$ is the process noise as assumed in Sec.~\ref{sec:system_formulation},
  $\bar{X}_{t} = D(\hat{X}_{t-1},\mathbf{u}_{t-1})$ (Lemma~\ref{lemma:prediction_full}), and
  $\bar{X}_{y_t} = O^{inv}(\bar{X}_{t}, \mathbf{u}_t, \mathbf{y}_t)$ (Eq.~\ref{eq:measurement:polytope}).
\end{lemma}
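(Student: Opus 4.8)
The plan is to split Eq.~\ref{eq:lemma:correction_full} into two claims: the \emph{membership} $d(\mathbf{x}_{t-1},\mathbf{u}_{t-1},\mathbf{w}_{t-1}) \in \bar{X}_t \cap \bar{X}_{y_t} \cap \mathcal{X}$, which is the probabilistic part, and the \emph{inclusion} $\bar{X}_t \cap \bar{X}_{y_t} \cap \mathcal{X} \subset \hat{X}_t$, which is deterministic. The inclusion is immediate: $\hat{X}_t$ is constructed as a zonotopic outer-approximation of the (zonotope $\cap$ polytope $\cap$ box) intersection in the correction step, so it contains that intersection by definition. Hence the argument reduces to showing that the true successor state $\mathbf{x}_t \coloneqq d(\mathbf{x}_{t-1},\mathbf{u}_{t-1},\mathbf{w}_{t-1})$ lands in the triple intersection with the stated probability.

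For the prediction side, I would invoke Lemma~\ref{lemma:prediction_full}: with probability at least $(1-\delta^g-\delta^w)$, jointly for all $t$, we have $\mathbf{x}_t \in \bar{X}_t = D(\hat{X}_{t-1},\mathbf{u}_{t-1})$ and, by part $(ii)$, $\mathbf{x}_t \in \mathcal{X}$. Call this event $A$; it depends only on the process noise $\mathbf{w}$ and the GP posterior for $g(\cdot)$. For the measurement side, I would use the construction of Sec.~\ref{ssec:method:measurement}. By definition, $\bar{X}_{y_t}$ in Eq.~\ref{eq:measurement:polytope} is exactly the set of states whose linearized-observation residual lies in $R^o(\bar{X}_t,\mathbf{u}_t)$; since $\mathbf{y}_t = o(\mathbf{x}_t,\mathbf{u}_t,\mathbf{v}_t)$, Eq.~\ref{eq:measurement:bounds} states precisely that the true state's residual lies in $R^o$, i.e.\ $\mathbf{x}_t \in \bar{X}_{y_t}$, jointly for all $t$ and all $\mathbf{x}_t \in \mathcal{X}$, with probability at least $(1-\delta^h-\delta^v)$. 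Call this event $B$; it depends only on the measurement noise $\mathbf{v}$ and the GP posterior for $h(\cdot)$. The intersection with $\mathcal{X}$ is what makes Eq.~\ref{eq:measurement:bounds} applicable at the true state, since its derivation is valid on the compact set $\mathcal{X}$ and event $A$ supplies $\mathbf{x}_t \in \mathcal{X}$.

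To combine, I would appeal to the independence of $\mathbf{w}$ and $\mathbf{v}$ (and of the two GPs) assumed in Sec.~\ref{sec:system_formulation}, so that $A$ and $B$ are independent and $\Pr[A \cap B] = \Pr[A]\,\Pr[B] \geq (1-\delta^g-\delta^w)(1-\delta^h-\delta^v)$. On $A \cap B$ the true state satisfies $\mathbf{x}_t \in \bar{X}_t \cap \bar{X}_{y_t} \cap \mathcal{X}$ jointly for all $t$, which together with the deterministic inclusion yields Eq.~\ref{eq:lemma:correction_full}.

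The main obstacle I anticipate is the probability bookkeeping: justifying the product $(1-\delta^g-\delta^w)(1-\delta^h-\delta^v)$ rather than a looser additive union bound requires arguing that event $A$ is genuinely independent of event $B$, which rests on $\mathbf{w}$ being independent of $\mathbf{v}$ and on the two applications of Lemma~\ref{lemma:well_calibrated} (to $g$ and to $h$) being over disjoint randomness. A secondary care point is the conditioning in Eq.~\ref{eq:measurement:bounds}, which was stated assuming $\bar{X}_t \subset \mathcal{X}$; I would make explicit that only $\mathbf{x}_t \in \mathcal{X} \cap \bar{X}_t$ (guaranteed by $A$) is actually needed to evaluate the measurement bound at the true state, which is exactly why the algorithm intersects with $\mathcal{X}$ in the correction phase.
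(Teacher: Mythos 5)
Your proposal is correct and follows essentially the same route as the paper: the paper proves this lemma by directly combining Lemma~\ref{lemma:prediction_full} (for $\mathbf{x}_t \in \bar{X}_t \cap \mathcal{X}$) with the bound in Eq.~\ref{eq:measurement:bounds} (for $\mathbf{x}_t \in \bar{X}_{y_t}$), with the product $(1-\delta^g-\delta^w)(1-\delta^h-\delta^v)$ justified by the independence of $\mathbf{w}$ and $\mathbf{v}$, and the final inclusion $\bar{X}_t \cap \bar{X}_{y_t} \cap \mathcal{X} \subset \hat{X}_t$ holding by construction of the outer-approximation in the correction phase. Your additional care about the role of $\mathcal{X}$ and the disjoint randomness underlying the two applications of Lemma~\ref{lemma:well_calibrated} only makes explicit what the paper leaves implicit.
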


This lemma states that the true state at time $t$ lies within the zonotope, $\hat{X}_{t}$, with high probability.
Lemma~\ref{lemma:correction_full} summarizes the derivations in Sec.~\ref{ssec:method:measurement};
it can be proved by directly combining Lemma~\ref{lemma:prediction_full} and the bound in Eq.~\ref{eq:measurement:bounds}.

The intersection in Eq.~\ref{eq:lemma:correction_full} is outer-approximated in two steps:
First, GP-ZKF follows Prop.~1 in~\cite{le2013zonotope} to obtain a zonotope denoted by $Z_t(\Lambda_t)$ and parameterized by the matrix $\Lambda_t$, such that $Z_t(\Lambda_t) \supset \left( \bar{X}_t \cap \bar{X}_{y_t}\right)$.
The parameter $\Lambda_t$ is obtained by analytically solving a convex program that minimizes the ``size'' of $Z_t(\Lambda_t)$ (see Sec.~6.1 in~\cite{alamo2005guaranteed}).
Then, GP-ZKF follows the same procedures to construct $\hat{X}_{t} \supset \left(\bar{X}_t \cap \bar{X}_{y_t} \cap \mathcal{X}\right)$.

The essence of the correction step can be represented by the function $E(\hat{X}_{t-1},\mathbf{u}_{t-1},\mathbf{u}_t,\mathbf{y}_t)$, which conducts the intersections and outputs the new zonotopic estimate, $\hat{X}_{t}$.

\section{Theoretical Guarantees}\label{sec:theory}
We present two key theorems about GP-ZKF.
First, as both epistemic and aleatoric uncertainties have been bounded, GP-ZKF is $\delta$-consistent (Def.~\ref{def:delta_consistency});
second, when both uncertainties are relaxed, GP-ZKF reduces to GP-EKF~\cite{ko2009gp}.
\begin{theorem}\label{thm:consistent}
  Given $\delta \in (0, 1)$, GP-ZKF chooses $\delta^g, \delta^h, \delta^w, \delta^v \in (0,1)$ such that $(1-\delta^g-\delta^w)(1-\delta^h-\delta^v) \geq (1-\delta)$.
  For all $n^g,n^h \in \mathbb{N}$, GP-ZKF chooses $\beta_{n^g}^g, \beta_{n^h}^h$ according to Lemma~\ref{lemma:well_calibrated}.
  Then, GP-ZKF is $\delta$-consistent.
\end{theorem}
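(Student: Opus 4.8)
The plan is to reduce $\delta$-consistency to the recursive containment property already established in Lemma~\ref{lemma:correction_full}, and then to close the argument with a deterministic induction over $t$ carried out on a \emph{single} high-probability event. Concretely, I would fix the ``good event'' $\mathcal{E}$ to be the event on which the conclusion of Lemma~\ref{lemma:correction_full} holds jointly for all $t=1,\dots,T$ (and uniformly over all admissible $\hat{X}_{t-1}\subset\mathcal{X}$, $\mathbf{x}_{t-1}\in\hat{X}_{t-1}$, controls, and measurements). By that lemma, $\Pr[\mathcal{E}] \geq (1-\delta^g-\delta^w)(1-\delta^h-\delta^v)$, the product form encoding the independence of the prediction-side bounds (on $g$ and $\mathbf{w}$) from the measurement-side bounds (on $h$ and $\mathbf{v}$). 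By the hypothesis on the chosen confidence levels, $(1-\delta^g-\delta^w)(1-\delta^h-\delta^v)\geq 1-\delta$, so $\Pr[\mathcal{E}]\geq 1-\delta$. Note that $\mathcal{E}$ already subsumes the compactness event of Lemma~\ref{lemma:compact_state_space}, since the probability budget of Lemma~\ref{lemma:correction_full} is inherited through Lemma~\ref{lemma:prediction_full}.

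Next, working entirely on $\mathcal{E}$, I would prove by induction on $t$ that $\mathbf{x}_t\in\hat{X}_t$ and $\mathbf{x}_t\in\mathcal{X}$ for every $t=0,\dots,T$. The base case $t=0$ is immediate: $\mathbf{x}_0\in\hat{X}_0$ is assumed in Def.~\ref{def:delta_consistency}, and $\mathbf{x}_0\in\mathcal{X}$ holds on $\mathcal{E}$ by Lemma~\ref{lemma:compact_state_space} (equivalently Lemma~\ref{lemma:prediction_full}$(ii)$). For the inductive step, assume $\mathbf{x}_{t-1}\in\hat{X}_{t-1}$ with $\hat{X}_{t-1}\subset\mathcal{X}$. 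Since $\mathbf{x}_t = d(\mathbf{x}_{t-1},\mathbf{u}_{t-1},\mathbf{w}_{t-1})$ by Eq.~\ref{eq:system:dynamics}, Lemma~\ref{lemma:correction_full} applied on $\mathcal{E}$ gives $\mathbf{x}_t\in(\bar{X}_t\cap\bar{X}_{y_t}\cap\mathcal{X})\subset\hat{X}_t$, which simultaneously yields $\mathbf{x}_t\in\hat{X}_t$ and $\mathbf{x}_t\in\mathcal{X}$. Iterating up to $t=T$ establishes $\forall t\colon \mathbf{x}_t\in\hat{X}_t$ on all of $\mathcal{E}$, hence $\Pr[\forall t\colon \mathbf{x}_t\in\hat{X}_t]\geq\Pr[\mathcal{E}]\geq 1-\delta$, which is exactly Def.~\ref{def:delta_consistency}.

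The point I expect to be most delicate is the bookkeeping that keeps the induction on a single probabilistic event rather than paying a fresh failure probability at each time step: this is precisely why Lemma~\ref{lemma:correction_full} is phrased with a joint ``for all $t$'' guarantee together with a uniform quantification over all $\hat{X}_{t-1}\subset\mathcal{X}$, so that conditioning once on $\mathcal{E}$ suffices for the entire horizon and no additional union bound over $t$ is incurred. The secondary technical concern is maintaining the precondition $\hat{X}_{t-1}\subset\mathcal{X}$ through the recursion, on which all the GP confidence intervals and reachability error bounds were derived; here I would rely on the correction step intersecting with $\mathcal{X}$ (so that $\hat{X}_t$ is built as an outer approximation of $\bar{X}_t\cap\bar{X}_{y_t}\cap\mathcal{X}$), which anchors the estimate to the compact region and guarantees $\mathbf{x}_t\in\mathcal{X}$ for the next step. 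Everything else is a direct unwinding of the definition of $\delta$-consistency.
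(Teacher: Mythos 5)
Your proposal is correct and follows essentially the same route as the paper: the paper's proof likewise recursively applies Lemma~\ref{lemma:correction_full} from $t=1$ to $T$ on the single joint high-probability event, with the probability bound $(1-\delta^g-\delta^w)(1-\delta^h-\delta^v)\geq 1-\delta$ inherited directly from that lemma. Your write-up merely makes explicit the induction bookkeeping (the base case $\mathbf{x}_0\in\hat{X}_0$, the maintenance of $\hat{X}_{t-1}\subset\mathcal{X}$ via the intersection with $\mathcal{X}$, and the avoidance of a per-step union bound) that the paper leaves implicit.
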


\begin{proof}
Similar to the proof for Cor.~7 in \cite{koller2019arXivlearning}, we recursively apply Lemma~\ref{lemma:correction_full} from $t=1$ to $T$ and obtain with high probability, jointly for all $t=1,\dots,T$, that $\mathbf{x}_t \in \hat{X}_t$.
\end{proof}

GP-EKF~\cite{ko2009gp} is an EKF-based state estimator that learns both dynamic and observation models via GPs;
we see GP-EKF as the \textit{stochastic} counterpart to our \textit{set-based} GP-ZKF.
At every time $t=1,\dots,T$, GP-EKF computes the Kalman gain, $K_t \in \mathbb{R}^{n_x \times n_y}$, and outputs a point estimate, $\mu_t \in \mathbb{R}^{n_x}$, and a covariance, $\Sigma_t \in \mathbb{R}^{n_x \times n_x}$ (see Table~2 in \cite{ko2009gp}).
To draw an analogy similar to Thm.7 in~\cite{combastel2015zonotopes}, we define GP-ZKF's Kalman gain as the matrix $\Lambda_t$, which parameterizes the intersection $\bar{X}_t \cap \bar{X}_{y_t}$, as mentioned at the end of Sec.~\ref{ssec:method:correction}.
We define GP-ZKF's point estimate as the zonotope center, $(\hat{X}_t)_{c}$, and covariance as $(\hat{X}_t)_{G}((\hat{X}_t)_{G})^T$ (the zonotope covariation, as seen in Def.~4 in \cite{combastel2015zonotopes}).
Next, we formally demonstrate that GP-ZKF can be reduced to (the Joseph form of) GP-EKF:
\begin{theorem}\label{thm:equal_to_gpekf}
  Assume that GP-ZKF's set-based estimates are always inside the state space---i.e., for all $t=1,\dots,T$, the zonotope $Z_t(\Lambda_t) \subset \mathcal{X}$, where $Z_t(\Lambda_t)$ is defined at the end of Sec.~\ref{ssec:method:correction}.
  Given the same initial condition, $\mu_0=(\hat{X}_0)_{c}$ and $K_0 = (\hat{X}_0)_{G}((\hat{X}_0)_{G})^T$,
  if GP-ZKF:
  $(i)$ sets the confidence interval scalings for $g(\cdot)$ and $h(\cdot)$, denoted by $\beta^g$ and $\beta^h$ (respectively), to $1$;
  $(ii)$ omits all noise bounds for $\mathbf{w}$ and $\mathbf{v}$ by setting $W=\emptyset$ and $V=\emptyset$ (Remark~\ref{remark:noise_bounds}); and
  $(iii)$ omits all linearization errors for $f(\cdot)$, $\mu^g$, $\sigma^g$, $\mu^h$, and $\sigma^h$ by setting $R^f(\cdot)=\emptyset$ (Eq.~\ref{eq:known_fcn_error}),
  $L^g_{\nabla \mu}=0$ (Eq.~\ref{eq:predict:learned_function:mean}),
  $L^g_{\sigma}=0$ (Eq.~\ref{eq:predict:learned_function:var}),
  $L^h_{\nabla \mu}=0$ (Sec.~\ref{ssec:method:measurement}),
  and
  $L^h_{\sigma}=0$ (Sec.~\ref{ssec:method:measurement});
  then, for time $t=1,\dots,T$, 
  we have $K_t = \Lambda_t$, $\mu_t = (\hat{X}_t)_{c}$, and $\Sigma_t = (\hat{X}_t)_{G}((\hat{X}_t)_{G})^T$.
\end{theorem}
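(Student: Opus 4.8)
The plan is to prove the equivalence by induction on the time index $t$, showing that each of the three phases of GP-ZKF collapses, under the stated simplifications, onto the corresponding step of the Joseph-form GP-EKF. The base case is immediate from the initial condition $\mu_0 = (\hat{X}_0)_{c}$ and $\Sigma_0 = (\hat{X}_0)_{G}((\hat{X}_0)_{G})^T$. For the inductive step I assume $\mu_{t-1} = (\hat{X}_{t-1})_{c}$ and $\Sigma_{t-1} = (\hat{X}_{t-1})_{G}((\hat{X}_{t-1})_{G})^T$, and track how the zonotope center and its covariation $(\cdot)_{G}((\cdot)_{G})^T$ evolve through the prediction, measurement, and correction phases.

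\textbf{Prediction.} Because $\bar{d}(\cdot)$ is linearized about $\bar{\mathbf{x}}_{t-1} = (\hat{X}_{t-1})_{c} = \mu_{t-1}$, evaluating $\bar{d}$ at the center yields $f(\mu_{t-1},\mathbf{u}_{t-1}) + \mu^g(\mu_{t-1},\mathbf{u}_{t-1})$, i.e. the GP-EKF predicted mean $\mu_t^-$; hence $(\bar{X}_t)_{c} = \mu_t^-$. With all linearization errors dropped ($R^f=\emptyset$, $L^g_{\nabla\mu}=L^g_{\sigma}=0$), aleatoric noise omitted ($W=\emptyset$), and $\beta^g=1$, the box $R^d$ of Eq.~\ref{eq:predict:bounds} reduces to the zero-centered box whose generators form $\mathrm{diag}(\sigma^g(\bar{\mathbf{x}}_{t-1},\mathbf{u}_{t-1}))$. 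Writing $J^d_x = J^f_x + J^{\mu_g}_x$ for the Jacobian of $\bar{d}$, the transformed zonotope contributes generators $J^d_x(\hat{X}_{t-1})_{G}$, so that $(\bar{X}_t)_{G} = [\,J^d_x(\hat{X}_{t-1})_{G} \;\; \mathrm{diag}(\sigma^g)\,]$ and its covariation equals $J^d_x \Sigma_{t-1}(J^d_x)^T + \mathrm{diag}((\sigma^g)^2)$, matching the GP-EKF predicted covariance $\Sigma_t^-$ with the GP posterior variance of $g(\cdot)$ playing the role of the process-noise covariance.

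\textbf{Measurement.} Applying the analogous simplifications to $h(\cdot)$ ($\beta^h=1$, $V=\emptyset$, $L^h_{\nabla\mu}=L^h_{\sigma}=0$), the box $R^o$ of Eq.~\ref{eq:measurement:bounds} reduces to a zero-centered box with generators $\mathrm{diag}(\sigma^h(\bar{\mathbf{x}}_t,\mathbf{u}_t))$, and the polytope $\bar{X}_{y_t}$ of Eq.~\ref{eq:measurement:polytope} becomes the strip $\{\mathbf{x}_t : H_t\mathbf{x}_t - \mathbf{d}_t \in R^o\}$ with $H_t = J^{\mu_h}_x$ and $\mathbf{d}_t = \mathbf{y}_t - \mu^h(\bar{\mathbf{x}}_t,\mathbf{u}_t) + H_t\bar{\mathbf{x}}_t$. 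I identify $H_t$ with the EKF observation Jacobian and $R_t \coloneqq \mathrm{diag}((\sigma^h)^2)$ with the measurement-noise covariance; note that the induced innovation $\mathbf{d}_t - H_t(\bar{X}_t)_{c}$ simplifies to the standard EKF innovation $\mathbf{y}_t - \mu^h(\bar{\mathbf{x}}_t,\mathbf{u}_t)$.

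\textbf{Correction (main obstacle).} The crux of the proof is the correction step. Using Prop.~1 in \cite{le2013zonotope}, the intersection $\bar{X}_t \cap \bar{X}_{y_t}$ is outer-approximated by $Z_t(\Lambda_t)$ with center $(\bar{X}_t)_{c} + \Lambda_t(\mathbf{d}_t - H_t(\bar{X}_t)_{c})$ and generator matrix $[\,(I - \Lambda_t H_t)(\bar{X}_t)_{G} \;\; \Lambda_t\,\mathrm{diag}(\sigma^h)\,]$; the hypothesis $Z_t(\Lambda_t)\subset\mathcal{X}$ renders the further intersection with $\mathcal{X}$ vacuous, so $\hat{X}_t = Z_t(\Lambda_t)$. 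The covariation of this generator matrix is exactly $(I-\Lambda_t H_t)\Sigma_t^-(I-\Lambda_t H_t)^T + \Lambda_t R_t \Lambda_t^T$, i.e. the Joseph form of the EKF covariance update, and the center reproduces the EKF mean update driven by the innovation above. What remains—and where the real work lies—is to show that the $\Lambda_t$ selected by GP-ZKF equals the Kalman gain. Here $\Lambda_t$ is the analytic minimizer of the ``size'' of $Z_t(\Lambda_t)$ obtained from the convex program in Sec.~6.1 of \cite{alamo2005guaranteed}; I would verify that its closed form is $\Sigma_t^- H_t^T(H_t\Sigma_t^- H_t^T + R_t)^{-1} = K_t$, mirroring Thm.~7 in \cite{combastel2015zonotopes} but now with the GP-linearized Jacobians $J^d_x, H_t$ and the GP posterior variances in the roles of the process and measurement covariances. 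Once this single substantive computation is established, $K_t = \Lambda_t$, $\mu_t = (\hat{X}_t)_{c}$, and $\Sigma_t = (\hat{X}_t)_{G}((\hat{X}_t)_{G})^T$ follow directly, closing the induction.
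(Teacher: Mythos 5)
Your proposal is correct and follows essentially the same route as the paper's own (much terser) proof: induction on $t$, observing that under the relaxations the uncertainty boxes $R^d$ and $R^o$ collapse to a single GP standard deviation, that $\hat{X}_t = Z_t(\Lambda_t)$ once the intersection with $\mathcal{X}$ becomes vacuous, and that the size-minimizing $\Lambda_t$ coincides with the Kalman gain. The one computation you flag as remaining --- verifying that the analytic minimizer of the zonotope-size criterion equals $\Sigma_t^- H_t^T(H_t\Sigma_t^- H_t^T + R_t)^{-1}$ --- is precisely the step the paper also does not carry out explicitly, deferring it to the proof of Thm.~7 in \cite{combastel2015zonotopes}, so your write-up is, if anything, more detailed than the original.
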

\begin{proof}
%
%
%
Under the relaxations above,
Eq.~\ref{eq:lemma:correction_full} becomes $\left( \bar{X}_t \cap \bar{X}_{y_t} \cap \mathcal{X}\right) \subset Z_t(\Lambda_t) = \hat{X}_{t}$.
And each of the uncertainty bounds, $R^d(\cdot)$ (Eq.~\ref{eq:predict:bounds}) and $R^o(\cdot)$ (Eq.~\ref{eq:measurement:bounds}), only contains one standard deviation.
%
%
%
As introduced at the end of Sec.~\ref{ssec:method:correction}, GP-ZKF obtains $\hat{X}_{t}(\Lambda_t)$ by optimizing $\Lambda_t$.
%
With the analytical solution, $\Lambda_t$, we reach the final conclusion by induction (see the proof of Thm.7 in~\cite{combastel2015zonotopes}).
%
%
\end{proof}

This theorem states that with certain relaxations, GP-ZKF could produce the same Kalman gain, point estimate, and covariance as GP-EKF.
The Kalman gain in GP-ZKF, $\Lambda_t$, weighs the dynamic-consistent zonotope, $\bar{X}_t$, and the measurement-consistent polytope, $\bar{X}_{y_t}$, when ``mixing'' them within the outer-approximated intersection.
In contrast to GP-EKF, Thm.~\ref{thm:equal_to_gpekf} signifies the conservativeness of GP-ZKF in bounding the linearization errors and aleatoric and epistemic uncertainties during estimation.
The conservativeness echoes GP-ZKF's consistency guarantee, as stated in Thm.~\ref{thm:consistent}.

\section{Experiment and Result}\label{sec:experiment_and_result}

\newcolumntype{R}{>{\centering\arraybackslash}p{1.975cm}}
\newcolumntype{S}{>{\centering\arraybackslash}p{0.3cm}}
\newcolumntype{O}{>{\centering\arraybackslash}p{0.575cm}}
\newcolumntype{Y}{>{\centering\arraybackslash}X}
\newcolumntype{Z}{>{\centering\arraybackslash}p{0.85cm}}
\newcolumntype{J}{>{\centering\arraybackslash}p{0.7cm}}
\newcolumntype{K}{>{\centering\arraybackslash}p{0.525cm}}
\newcolumntype{L}{>{\centering\arraybackslash}p{0.985cm}}
\newcolumntype{I}{>{\centering\arraybackslash}p{1.575cm}}
\newcolumntype{E}{>{\centering\arraybackslash}p{1.425cm}}
\newcolumntype{Q}{>{\centering\arraybackslash}p{1.2cm}}
\newcolumntype{M}{>{\centering\arraybackslash}p{1.1cm}}
\begin{table}[t]
    \centering
    \vspace{2mm}
    \caption{The estimation results in the Simulated Pendulum Domain.}
    \label{tab:num_pendulum}
    \begin{tabular}{MMISIK}
        \toprule
        Data & Method & RMSE $(\theta,\dot{\theta})$ & Incl. & Radius $(\theta,\dot{\theta})$  & Time \\
        shift &  &  
        ( $ ^\circ$, $^\circ$/$s$) & (\%) & ( $ ^\circ$, $^\circ$/$s$) & (s)\\
        \midrule
        Both
& GP-EKF & 20.2, 37.5 & 5 & 1.7, 9.2 & 0.003\\
models
& GP-UKF & 7.4, 15.6 & 38 & 10.0, 32.1 & 0.009\\
& GP-PF & 90.5, 64.4 & 28 & 27.0, 54.6 & 1.459\\
& \textbf{GP-ZKF} & 16.4, 20.4 & \textbf{83} & 46.7, 131.4 & 0.004\\
        \midrule
        Dynamics
& GP-EKF & 15.3, 22.1 & 0.17 & 0.2, 0.4 & 0.003\\
model & GP-UKF & 2.2, 10.8 & 27 & 1.6, 17.8 & 0.009\\
& GP-PF & 69.6, 46.9 & 16 & 11.8, 25.5 & 1.500\\
& \textbf{GP-ZKF} & 16.4, 21.5 & \textbf{88} & 44.6, 72.5 & 0.004\\
        \midrule
        Observation
& GP-EKF & 15.6, 21.7 & 4 & 1.4, 2.0 & 0.003\\
model & GP-UKF & 25.4, 31.2 & 27 & 35.9, 40.6 & 0.009\\
& GP-PF & 222.7, 157.4 & 18 & 82.9, 100.2 & 1.544\\
& \textbf{GP-ZKF} & 15.3, 18.6 & \textbf{87} & 47.9, 126.0 & 0.004\\
        \midrule
        None
& GP-EKF & 15.4, 20.4 & 0.00 & 0.2, 0.4 & 0.003\\
& GP-UKF & 2.2, 7.8 & 18 & 1.5, 5.1 & 0.009\\
& GP-PF & 161.3, 126.7 & 10 & 47.0, 47.8 & 1.573\\
& \textbf{GP-ZKF} & 16.6, 19.7 & \textbf{92} & 47.1, 72.8 & 0.004\\
        \bottomrule
    \end{tabular}
    \vspace{-4mm}
\end{table}

In Sec.~\ref{sec:theory}, we theoretically highlight GP-ZKF's consistency guarantee and its connection with GP-EKF.
We also empirically demonstrate GP-ZKF's improved consistency against three GP-based stochastic approaches: GP-EKF, GP-UKF, and GP-PF~\cite{ko2009gp}, in both a simulated inverted pendulum domain under significant epistemic uncertainty and a real-world robot-assisted dressing domain.

In both domains, we evaluated all methods using the following metrics:
(1) Avg. root-mean-square error (RMSE) (per dimension) for the point estimate.
GP-ZKF's point estimate is defined as the zonotope center, $(\hat{X}_t)_{c}$ (see Sec.~\ref{sec:theory}).
(2) Inclusion: the percentage of time steps during which the true state is ``included'' by the set-based estimate, which measures a method's consistency.
Recall that Thm.~\ref{thm:equal_to_gpekf} draws equivalence between GP-ZKF's zonotopic estimate and the \textit{unscaled} covariance from GP-EKF.
In practice, we defined GP-EKF, GP-UKF, and GP-PF's set-based estimates as the $95\%$ confidence ellipsoids determined by the \textit{up-scaled} covariance matrices (GP-PF's posteriors were approximated as Gaussians) such that they can be on par with GP-ZKF.
(3) Avg. radius (per dimension): the radius of the box that outer-approximates the set-based estimates, which measures conservativeness.
(4) Avg. computation time per time step.

\begin{figure}[t]
    \centering
    \vspace{4mm}
    \hspace{-4mm}
    \def\svgwidth{0.75\linewidth}
    \input{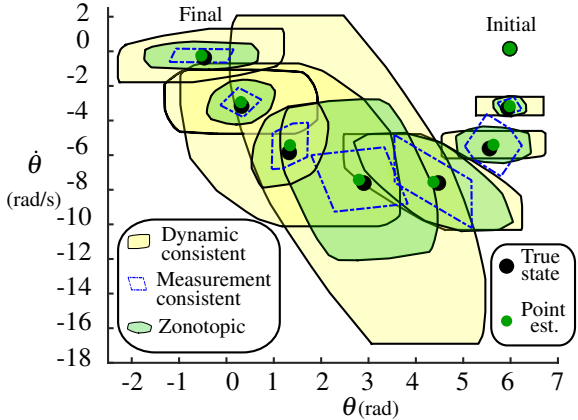}
    \caption{
    Zonotopic estimates along a trajectory produced by GP-ZKF under the \emph{Shift None} condition.
    The zonotopic estimates, $\hat{X}_t$ (green fill), always outer-approximate the intersection between the dynamic-consistent zonotope, $\bar{X}_{t}$ (yellow fill), and the measurement-consistent zonotope, $\bar{X}_{y_t}$ (blue outline).
    (1) Despite the erroneous point estimates, defined as $(\hat{X}_t)_{c}$ (green dots), the zonotopes, $\hat{X}_t$, still contain the true states (black dots).
    (2) Even though the volumes of $\bar{X}_{t}$ may grow rapidly during uncertainty propagation, the volumes of $\hat{X}_t$ shrink when informative measurements are available.
    }
    \label{fig:pendulum_zono_illustration}
    \vspace{-4mm}
\end{figure}

\subsection{Simulated Pendulum Domain}
We simulated a discrete-time 2D inverted pendulum controlled by an infinite-horizon linear quadratic regulator. The state, $\mathbf{x}=[\theta, \dot{\theta}]^T$ (angle and angular velocity), and the set-point correspond to the pendulum standing upright ($0^\circ$). The dynamics, $d(\cdot)$, is the closed-loop system, with noise $\bm{w}$ ($\lambda_w=7.16^\circ$). The observation function, $o(\cdot)$, maps the state to the end effector position and velocity $\in \mathbb{R}^{4}$, with noise $\bm{v}$ ($\lambda_v=8.88^\circ$). The known dynamics, $f(\cdot)$, is given by the linearized and discretized system around the set-point.

To evaluate the methods' consistency under significant epistemic uncertainties, we induced distribution shifts between training and testing. For testing, we ran each method with $T=15$ from four uniformly sampled starting points within the \emph{testing region}, $\theta \in [\pi,2\pi]$, with 10 repetitions per starting point. The training data included four variants:
(1) \emph{Shift Both}: the GPs for both $g(\cdot)$ and $h(\cdot)$ were trained with a \emph{default} dataset with nine rollouts inside the \emph{training region}, $\theta \in [0,\pi]$. The epistemic uncertainties in both the dynamic and observation models were high due to the shift from the training region to the testing region.
(2) \emph{Shift Dynamics}: the GP for $h(\cdot)$ was trained with an additional dataset whose input locations were uniformly sampled in the \emph{testing region}. 
(3) \emph{Shift Observation}: the GP for $g(\cdot)$ was trained with five additional rollouts inside the \emph{testing region}.
(4) \emph{Shift None}: $g(\cdot)$ and $h(\cdot)$ were each trained with their own additional dataset besides the default dataset.
Similar to~\cite{koller2019arXivlearning}, confidence interval scalings $\beta^g$ and $\beta^h$ are conservative. We specified and scaled them when more data was available, to reflect their dependency upon the information capacity, $\gamma$.

We present our results in Tab.~\ref{tab:num_pendulum}.
Due to the nonlinearity of the dynamic and observation functions, the linearization-based method (GP-EKF) performed poorly, with very low inclusions and small radii. GP-UKF (not linearization-based) performed better, with lower RMSEs, higher inclusions and larger radii. GP-PF produced large RMSEs, potentially due to the narrowness of GP's predictive distributions. In contrast, GP-ZKF achieved the highest amount of inclusions for all conditions; this empirically validates GP-ZKF's conservativeness in respecting linearization errors and epistemic and aleatoric uncertainties, which echos GP-ZKF's probabilistic consistency guarantee (Thm.~\ref{thm:consistent}).
Fig.~\ref{fig:pendulum_zono_illustration} illustrates the zonotopic estimates produced by GP-ZKF under the \emph{Shift None} condition. Admittedly, GP-ZKF is more conservative than the others, resulting in larger radii. We argue that GP-ZKF's conservativeness actually scales appropriately with the domain, as we will next demonstrate its low conservativeness in the dressing domain.

\subsection{Robot-Assisted Dressing Domain}\label{ssec:dressing}
A robot arm was controlled to dress a \emph{long-sleeved} jacket onto a human arm (Fig.~\ref{fig:teaser}).
We evaluated all methods offline to estimate the visually occluded positions of human elbow~\cite{zhang2019probabilistic}.

During data collection, the human naturally moved his arm, the configuration of which was tracked by the Xsens motion capture system, which does not suffer from visual occlusion.
The robot was controlled to move from the human hand to the elbow, and then to the shoulder position.
The data is composed of three initial arm conditions: \emph{bend}, \emph{lower}, and \emph{straight}, with $17$, $11$, and $12$ trajectories, respectively.

\newcolumntype{R}{>{\centering\arraybackslash}p{1.775cm}}
\newcolumntype{S}{>{\centering\arraybackslash}p{0.25cm}}
\newcolumntype{O}{>{\centering\arraybackslash}p{0.675cm}}
\newcolumntype{M}{>{\centering\arraybackslash}p{1.1cm}}
\begin{table}[t]
    \vspace{2mm}
    \centering
    \caption{The estimation results in the Robot-Assisted Dressing Domain.}
    \label{tab:num_dressing}
    \begin{tabular}{OMRSRK}
        \toprule
        Arm & Method & RMSE $(x,y,z)$ & Incl. & Radius $(x,y,z)$ & Time \\
        pose &  & (cm,cm,cm) & (\%) & (cm,cm,cm) & (s)\\
        \midrule
        Bend
& GP-EKF & 3.9, 3.3, 3.8 & 83 & 11, 19, 23 & 0.25\\
& GP-UKF & 9.8, 3.0, 6.1 & 76 & 26, 10, 15 & 1.68\\
& GP-PF & 2.4, 2.6, 3.0 & 73 & 6, 6, 7 & 29.93\\
& \textbf{GP-ZKF} & 2.8, 3.4, 4.3 & \textbf{88} & 8, 8, 8 & 0.25\\
        \midrule
        Lower
& GP-EKF & 5.2, 3.0, 6.9 & 94 & 24, 19, 22 & 0.11\\
& GP-UKF & 9.2, 12.4, 13.6 & 80 & 27, 22, 33 & 0.71\\
& GP-PF & 5.0, 2.7, 4.1 & 70 & 11, 8, 11 & 12.74\\
& \textbf{GP-ZKF} & 4.6, 2.7, 4.7 & \textbf{97} & 12, 10, 12 & 0.11\\
        \midrule
        Straight
& GP-EKF & 2.0, 3.0, 3.9 & 88 & 14, 15, 14 & 0.12\\
& GP-UKF & 3.4, 3.8, 3.6 & 64 & 7, 7, 9 & 0.78\\
& GP-PF & 2.5, 3.0, 2.8 & 62 & 6, 6, 6 & 13.95\\
& \textbf{GP-ZKF} & 1.7, 2.8, 4.1 & \textbf{92} & 8, 8, 8 & 0.12\\
        \bottomrule
    \end{tabular}
    \vspace{-4mm}
\end{table}

The state, $\mathbf{x} \in \mathbb{R}^{3}$, is the human elbow position.
The control, $\mathbf{u} \in \mathbb{R}^{9}$,
contains the positions of the human hand and shoulder, and the robot end effector.
The measurement $\mathbf{y} \in \mathbb{R}^{3}$ is the processed force signal, obtained at the robot end effector.
We applied a low-pass filter to the force and converted it to the approximate position of the center of the cuff via a tether-inspired parametric model~\cite{tognon2021physical}.
The known dynamic model is specified as $f(\mathbf{x},\mathbf{u})=\mathbf{x}$;
the variances of the noises, $\mathbf{w}$ and $\mathbf{v}$, are automatically identified within GP regression.

We present the cross-validated results within each initial arm condition in Tab.~\ref{tab:num_dressing}.
Both GP-EKF and GP-UKF achieved high amount of inclusions that were comparable with GP-ZKF;
however, GP-ZKF's radii were much smaller, demonstrating its appropriate conservativeness.
Essentially, GP-ZKF provided consistent set-based estimates without being overly conservative;
\cref{fig:dressing} illustrates the zonotopic estimates it produced.

\section{Conclusion and Future Work}\label{sec:conclusion_future_work}
This work proposes a set-based estimator, GP-ZKF, that is able to produce zonotopic estimates based on learned dynamic and observation models.
Theoretically, GP-ZKF guarantees probabilistic consistency.
In addition, the \emph{stochastic} approach, GP-EKF, can be seen as a special case of our \emph{set-based} method, GP-ZKF---\emph{i.e.}, where linearization errors and aleatoric uncertainties are omitted and epistemic uncertainties are simplified.
Empirically, GP-ZKF outperformed the \emph{stochastic} baselines (GP-EKF, GP-UKF, and GP-PF~\cite{ko2009gp}) in both a simulated pendulum and real-world dressing domain.
Future work will focus on combining our approach with control methods~\cite{LiRSS21,Stou2020Online}, as well as richer and more computationally scalable models for force-based estimation.




\section{APPENDIX}


We combine the confidence intervals for $g(\cdot)$ (Lemma~\ref{lemma:well_calibrated}) with the bounds for $\mu^g$ (Eq.~\ref{eq:predict:learned_function:mean}) and $\sigma^g$ (Eq.~\ref{eq:predict:learned_function:var}) to bound the error $|g_j(\cdot)-\bar{\mu}^g_j(\cdot)|$ for each $j=1,\dots,n_x$ as follows:
\begin{equation}
\resizebox{1\hsize}{!}{$
\begin{aligned}
  &|g_j(\mathbf{x}_{t-1}, \mathbf{u}_{t-1}) - \bar{\mu}^g_j(\mathbf{x}_{t-1}, \mathbf{u}_{t-1})|
  \\
  &\leq
  |\mu^g_j(\mathbf{x}_{t-1}, \mathbf{u}_{t-1}) - \bar{\mu}^g_j(\mathbf{x}_{t-1}, \mathbf{u}_{t-1})|
  +
  |g_j(\mathbf{x}_{t-1}, \mathbf{u}_{t-1}) - \mu^g_j(\mathbf{x}_{t-1}, \mathbf{u}_{t-1})|
  \:\:(b)
  \\
  &\leq \frac{1}{2} L^g_{\nabla \mu} \cdot \|\mathbf{x}_{t-1} - \bar{\mathbf{x}}_{t-1}\|_2^2
  + \beta^g \cdot \sigma^g_{j}(\mathbf{x}_{t-1}, \mathbf{u}_{t-1})
  \:\:(c)
  \\
  &\leq \frac{1}{2} L^g_{\nabla \mu} \cdot \|\mathbf{x}_{t-1} - \bar{\mathbf{x}}_{t-1}\|_2^2
  + \beta^g \cdot \sigma^g_{j}(\bar{\mathbf{x}}_{t-1}, \mathbf{u}_{t-1})
  + \beta^g L^g_{\sigma} \cdot \|\mathbf{x}_{t-1} - \bar{\mathbf{x}}_{t-1}\|_2^{1/2}
  \:\:(d)
  \\
  &\leq \frac{1}{2} L^g_{\nabla \mu} \cdot \|\hat{X}_{t-1}-\bar{\mathbf{x}}_{t-1}\|_2^2
  + \beta^g \cdot \sigma^g_{j}(\bar{\mathbf{x}}_{t-1}, \mathbf{u}_{t-1})
  + \beta^g L^g_{\sigma} \cdot \|\hat{X}_{t-1}-\bar{\mathbf{x}}_{t-1}\|_2^{1/2}
  \:\:(e)
\nonumber
\end{aligned}
$}
\end{equation}
Here, $(b)$ is obtained via the triangle inequality,
and we achieve $(c)$ by using a union bound to combine the mean's error bound (Eq.~\ref{eq:predict:learned_function:mean}) and confidence intervals (Lemma~\ref{lemma:well_calibrated}).
Hence, all inequalities starting at $(c)$ hold with probability at least $(1-\delta^g-\delta^w)$.
$(d)$ is obtained by applying the standard deviation's error bound (Eq.~\ref{eq:predict:learned_function:var});
$(e)$ is due to the norm of the translated zonotope
$\|\hat{X}_{t-1}-\bar{\mathbf{x}}_{t-1}\|_2 \coloneqq \max_{\mathbf{x}_{t-1} \in \hat{X}_{t-1}} \|\mathbf{x}_{t-1} - \bar{\mathbf{x}}_{t-1}\|_2$~\cite{althoff2021cora}.

\begin{figure}[t]
 \vspace{2mm}
  \centering
  \def\svgwidth{0.99\linewidth}
\begingroup%
  \makeatletter%
  \providecommand\color[2][]{%
    \errmessage{(Inkscape) Color is used for the text in Inkscape, but the package 'color.sty' is not loaded}%
    \renewcommand\color[2][]{}%
  }%
  \providecommand\transparent[1]{%
    \errmessage{(Inkscape) Transparency is used (non-zero) for the text in Inkscape, but the package 'transparent.sty' is not loaded}%
    \renewcommand\transparent[1]{}%
  }%
  \providecommand\rotatebox[2]{#2}%
  \newcommand*\fsize{\dimexpr\f@size pt\relax}%
  \newcommand*\lineheight[1]{\fontsize{\fsize}{#1\fsize}\selectfont}%
  \ifx\svgwidth\undefined%
    \setlength{\unitlength}{187.68000412bp}%
    \ifx\svgscale\undefined%
      \relax%
    \else%
      \setlength{\unitlength}{\unitlength * \real{\svgscale}}%
    \fi%
  \else%
    \setlength{\unitlength}{\svgwidth}%
  \fi%
  \global\let\svgwidth\undefined%
  \global\let\svgscale\undefined%
  \makeatother%
  \begin{picture}(1,0.29795395)%
    \lineheight{1}%
    \setlength\tabcolsep{0pt}%
    \put(0,0){\includegraphics[width=\unitlength,page=1]{./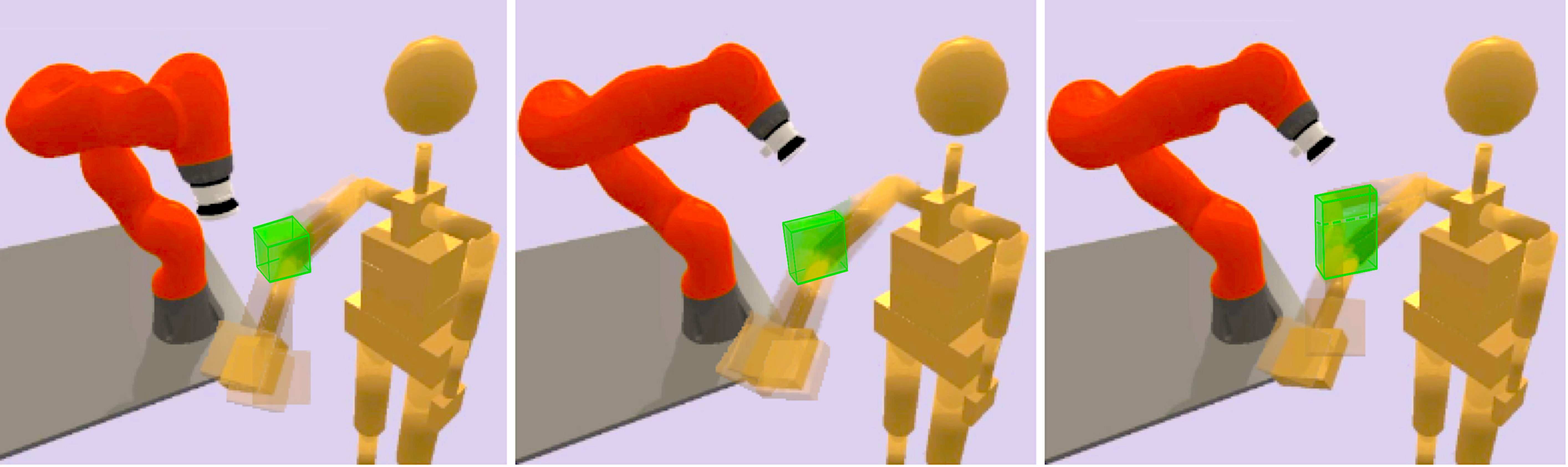}}%
    \put(0.01657429,0.04962152){\color[rgb]{0,0,0}\makebox(0,0)[lt]{\lineheight{0.60000002}\smash{\begin{tabular}[t]{l}(a)\end{tabular}}}}%
    \put(0.33757429,0.04962152){\color[rgb]{0,0,0}\makebox(0,0)[lt]{\lineheight{0.60000002}\smash{\begin{tabular}[t]{l}(b)\end{tabular}}}}%
    \put(0.6757429,0.04962152){\color[rgb]{0,0,0}\makebox(0,0)[lt]{\lineheight{0.60000002}\smash{\begin{tabular}[t]{l}(c)\end{tabular}}}}%
  \end{picture}%
\endgroup%

  \caption{
  Keyframes of the robot-assisted dressing task (cloth is omitted). The simulated robot and human motions were reproduced based on real-world data (Fig.~\ref{fig:teaser}). 
  Each keyframe illustrates a few ground-truth states of the human arm and the zonotopic estimate of the human elbow position (outer-approximated as boxes $\subset \mathbb{R}^{3}$, visualized in green).
  }
  \vspace{-1mm}
  \label{fig:dressing}
  \vspace{-3mm}
\end{figure}

\appendices






%


\bibliographystyle{IEEEtran}
\bibliography{references}


%






\end{document}